\documentclass[reqno, letterpaper,11pt]{amsart}
\usepackage{amsmath,enumerate,enumitem,amsfonts,color,amsthm,hyperref,bbm} 
\usepackage{appendix}
\title[Scaling Limits for Exponential Hedging]{Scaling Limits for Exponential Hedging in Trinomial Models}

\author{Yan Dolinsky} 
\address{Department of Statistics, Hebrew University}
\email{yan.dolinsky@mail.huji.ac.il}

\author{Xin Zhang} 
\address{Department of Finance and Risk Engineering, New York University}
\email{xz1662@nyu.edu}

\date{\today}

\numberwithin{equation}{section}  
\newtheorem{defn}{Definition}[section]
\newtheorem{example}[defn]{Example}

\newtheorem{thm}[defn]{Theorem}

\newtheorem{cor}[defn]{Corollary}
\newtheorem{lem}[defn]{Lemma}
\newtheorem{assume}[defn]{Assumption}

\begin{document}

\begin{abstract}
We study scaled trinomial models  converging to the Black--Scholes model, and analyze exponential certainty-equivalent prices for path-dependent European options. 
As the number of trading dates $n$ tends to infinity and the risk aversion is scaled as $nl$
for a fixed constant $l>0$, we derive a nontrivial scaling limit. Our analysis is 
purely probabilistic. Using a duality argument for the certainty equivalent, together with martingale 
and weak-convergence techniques, we show that the limiting problem takes the form of a volatility 
control problem with a specific penalty. 
For European options with Markovian payoffs, 
we analyze the optimal control problem and show that the corresponding delta-hedging strategy 
is asymptotically optimal for the primal problem.
\end{abstract}

\keywords{Exponential hedging, trinomial models, asymptotic analysis.}
\thanks{Y. Dolinsky is partially supported by the ISF grant 305/25. X. Zhang is partially supported by the NSF grant DMS-2508556.}

\maketitle

\section{Introduction and the Main Result}

Understanding hedging and pricing in incomplete markets is a central problem in mathematical finance. In complete models, such as the Cox--Ross--Rubinstein binomial model \cite{CRR}, a European contingent claim can be perfectly replicated, and the absence of residual risk leads to a unique price. In incomplete models, by contrast, perfect replication typically fails, and one must replace replication-based pricing by criteria that account for risk preferences. Exponential utility and the associated certainty equivalent provide a natural and tractable framework for this purpose \cite{C}.

This paper studies exponential hedging in trinomial models that converge to the Black--Scholes model. The trinomial setting is a natural extension of the binomial model: in addition to upward and downward movements, the stock may also remain unchanged. This additional flexibility destroys market completeness, but it also makes the model rich enough to capture nontrivial hedging effects in the large-\(n\) limit. Our main result identifies this limit and shows that, after a suitable scaling of risk aversion, the discrete-time exponential hedging problem converges to a continuous-time volatility control problem with an explicit entropy-type penalty. In this way, the paper provides a connection between discrete-time utility-based hedging and continuous-time stochastic control. 

The problem is also motivated by recent developments in divergence-based approaches to model uncertainty and calibration. In particular, a recent paper \cite{BZ} introduced the reciprocal specific relative entropy between continuous martingale measures through trinomial approximations. That divergence has potential applications to model calibration \cite{Av01}, and is closely related to specific relative entropy \cite{BB,BB24,BaUn23,BWZ24}  and to specific Wasserstein-type divergences \cite{BZ25}. Our scaling limit therefore gives a concrete financial interpretation of these quantities through exponential hedging in discrete incomplete markets. 

More precisely, fix a constant \(\bar{\sigma}>0\) and \(p\in(0,1)\), and consider the probability space
\[
\Omega:=\{-1,0,1\}^{\mathbb N},
\qquad
\mathbb P:=\{p/2,\,1-p,\,p/2\}^{\mathbb N}.
\]
Let \(\{\xi_i\}_{i\in\mathbb N}\) be the canonical process and let \(\mathcal F_k=\sigma\{\xi_1,\dots,\xi_k\}\), with \(\mathcal F_0\) trivial. For each \(n\), we consider the \(n\)-step trinomial model on the time grid
\[
0,\frac1n,\frac2n,\dots,1,
\]
with zero interest rate and stock price process
\[
S_k^n=S_0\prod_{j=1}^k\left(1+\frac{\bar{\sigma}}{\sqrt n}\xi_j\right),
\qquad k=0,1,\dots,n.
\]
We assume \(n\) is sufficiently large so that \(S_k^n>0\) for all \(k\). 

A trading strategy \(\gamma=(\gamma_0,\dots,\gamma_{n-1})\) is an adapted process, where \(\gamma_k\) denotes the number of shares held at time \(\frac{k}{n}\). Its terminal gain is
\[
V^\gamma=\sum_{i=0}^{n-1}\gamma_i\left(S_{i+1}^n-S_i^n\right).
\]
Let \(\Gamma_n\) denote the set of all such strategies.

Denote by $C[0,1]$ the space of continuous functions on $[0,1]$,
equipped with the uniform norm.
Denote by $C_{++}[0,1] \subset C[0,1]$ the space of all strictly positive continuous functions.
Consider a continuous map $F :C_{++}[0,1] \to \mathbb{R}$ and assume there exist constants $C, r > 0$ such that
\begin{equation}\label{3}
|F(y)| \le C \left(1 + \sup_{t \in [0,1]} \left( y_t^r + \frac{1}{y^r_t} \right) \right),
\qquad \forall\, y \in C_{++}[0,1].
\end{equation}
For any $n\in\mathbb{N}$, let $H^n:\mathbb{R}^{n+1}\to C[0,1]$ be the linear interpolation operator
\begin{align}\label{eq:Linterpolate}
H^n(z)_t:=\bigl([nt]+1-nt\bigr)z_{[nt]}
+\bigl(nt-[nt]\bigr)z_{[nt]+1},
\qquad t\in[0,1],
\end{align}
where $z=(z_0,\ldots,z_n)\in\mathbb{R}^{n+1}$ and $[t]$ denotes the integer part of $t$.

We consider a path-dependent European payoff of the form
\[
F_n:=F(H_n(S^{n})),
\]
where \(S^{n}=(S_0^n,\dots,S_n^n)\). For a fixed \(\ell>0\), we impose exponential utility with absolute risk aversion \(n\ell\), and define the corresponding \emph{certainty equivalent} by (see Chapter~2 in \cite{C})
\[
\mathcal C_n:=\frac{1}{n\ell}\log\left(\inf_{\gamma\in\Gamma_n}\mathbb E_{\mathbb P}\big[\exp(n\ell(F_n-V^\gamma))\big]\right).
\]
Economically, \(\mathcal C_n\) is the amount of certain wealth that makes the investor indifferent between selling and hedging the claim, and not entering the position. Equivalently, $\mathcal C_n$ is the unique solution of
\[
\inf_{\gamma\in\Gamma_n}
\mathbb{E}_{\mathbb{P}}
\left[\exp\left(nl\left(F_n-V^{\gamma}-\mathcal C_n\right)\right)\right]
=1.
\]

Our main theorem shows that \(\mathcal C_n\) admits a nontrivial  limit. Consider a complete probability space
$(\Omega_W,\mathcal{F}_W,\mathbb{P}_W)$ supporting
a standard Brownian motion $W$ with right-continuous filtration
$\mathcal{F}^W_t$, $t\in[0,1]$. Let $\mathcal A_{\bar{\sigma}}$ denote the set of all bounded, progressively measurable, nonnegative processes \((\alpha_t)_{t \ge 0}\) satisfying \(\alpha_t \le \bar{\sigma}\) for all \(t \ge 0\). Define
\[
S_t^\alpha
=
S_0\exp\left(\int_0^t \alpha_u\,dW_u-\frac12\int_0^t \alpha_u^2\,du\right),
\qquad t\in[0,1].
\]

\begin{thm}\label{thm.1}
We have that
\begin{align}\label{2.3}
\lim_{n\to\infty} \mathcal C_n
=
\sup_{\alpha\in\mathcal A_{\bar{\sigma}}}
\mathbb E_{\mathbb P^{\mathbb W}}\left[
F(S^\alpha)
-\frac1\ell\int_0^1
G_p\!\left(\frac{\alpha_t^2}{\bar{\sigma}^2}\right)\,dt
\right],
\end{align}
where
\[
G_p(x):=
x\log\left(\frac{x}{p}\right)
+
(1-x)\log\left(\frac{1-x}{1-p}\right),
\qquad 0\le x\le 1.
\]
\end{thm}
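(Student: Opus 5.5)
The plan is to begin with the classical duality for exponential utility on a finite probability space. The certainty equivalent satisfies
\[
\mathcal C_n=\sup_{\mathbb Q\in\mathcal M_n}\Big(\mathbb E_{\mathbb Q}[F_n]-\frac{1}{n\ell}H(\mathbb Q\,|\,\mathbb P)\Big),
\]
where $\mathcal M_n$ is the set of martingale measures for $S^n$ on $\mathcal F_n$. This set is nonempty and compact, and $F_n$ is bounded on the finite space, so the duality follows from a minimax argument (Chapter~2 of \cite{C}).

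A martingale measure is specified step by step through conditional probabilities. Given $\mathcal F_{k-1}$, the martingale condition forces $\mathbb Q(\xi_k=1)=\mathbb Q(\xi_k=-1)=q_k/2$ for a predictable $q_k\in[0,1]$. The chain rule for entropy then gives
\[
H(\mathbb Q|\mathbb P)=\mathbb E_{\mathbb Q}\sum_{k=1}^n G_p(q_k).
\]
Hence
\[
\mathcal C_n=\sup_{q}\mathbb E_{\mathbb Q^q}\Big[F_n-\frac1\ell\cdot\frac1n\sum_k G_p(q_k)\Big].
\]
Under $\mathbb Q^q$, the conditional variance of $\log$-increments is approximately $\bar\sigma^2 q_k/n$. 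The identification $\alpha^2=\bar\sigma^2 q$ therefore makes the discrete problem a discretized version of the right-hand side of \eqref{2.3}.

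For the \emph{lower bound}, I will approximate the supremum in \eqref{2.3} by controls of the form $\alpha_t=\phi(t,W_{t_1},\dots,W_{t_m})$. Here $\phi$ is continuous, and $\alpha^2/\bar\sigma^2$ takes values in $[\epsilon,1-\epsilon]$. This approximation uses standard density arguments, continuity of $G_p$ on $[0,1]$, stability of $S^\alpha$ under $L^2$ perturbations of $\alpha$, and the growth bound \eqref{3} together with uniform moment bounds, since $\alpha\le\bar\sigma$. For such a control I build $q_k=\phi(k/n,\widetilde W^n_{t_1},\dots)^2/\bar\sigma^2$. Here $\widetilde W^n$ is the discrete proxy $\widetilde W^n_{k/n}=\sum_{j\le k}\xi_j/\sqrt{n q_j}$, which has variance close to $1/n$ per step under $\mathbb Q^q$. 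The martingale central limit theorem then gives joint weak convergence of $(H^n(S^n),\widetilde W^n)$ to $(S^\alpha,W)$. Uniform integrability of $F_n$ follows from \eqref{3} and moment bounds on $\sup_k (S^n_k)^{\pm r}$ under $\mathbb Q^q$, which hold because the increments are bounded by $\bar\sigma/\sqrt n$. The penalty converges by continuity.

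For the \emph{upper bound}, I take $\epsilon$-optimal $q^n$ and the corresponding measures $\mathbb Q_n$. The processes $H^n(S^n)$ are martingales with increments bounded by $\bar\sigma S/\sqrt n$, so the laws of $(H^n(S^n),\int_0^{\cdot}q^n_{[nt]+1}dt)$ are tight. Any limit point is a continuous martingale $S$ with $d\langle \log S\rangle_t=\bar\sigma^2\,dA_t$, where $A$ is Lipschitz with derivative $a_t\in[0,1]$. Applying the same moment bounds and \eqref{3}, $\mathbb E_{\mathbb Q_n}F_n\to\mathbb E[F(S)]$ along the subsequence. Convexity of $G_p$ together with weak lower semicontinuity of $\int G_p(a_t)dt$ as a functional of $A$ (a convex integral functional) gives $\liminf \mathbb E_{\mathbb Q_n}\frac1n\sum G_p(q^n_k)\ge \mathbb E\int_0^1 G_p(a_t)dt$. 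The limit may live on an arbitrary filtered space. To return to the Brownian formulation, I will write $S$ via a Brownian motion on an enlarged space, possibly adding an independent Brownian motion where $a=0$, and then show that the value of the weak formulation equals the strong value over $\mathcal A_{\bar\sigma}$. This follows from the same density argument as in the lower bound, or from standard equivalence of weak and strong formulations of control problems.

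The main obstacle is the upper bound, and specifically this last identification. The martingale limit must be shown to be representable so that the penalty passes through the limit (lower semicontinuity under joint convergence with the quadratic variation). Also, weak-formulation controls must not exceed the strong supremum in \eqref{2.3}. I would handle the latter through the chattering or continuity argument, approximating $a$ by functionals of the driving Brownian path.
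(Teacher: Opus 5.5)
Your plan is correct and follows the paper's proof essentially step by step. Both arguments use the dual representation over martingale measures and the chain-rule identity $H(\mathbb Q|\mathbb P)=\mathbb E_{\mathbb Q}\sum_k G_p(q_k)$. The lower bound uses density of simple controls, a recursively constructed $\mathbb Q_n$ and the martingale CLT. The upper bound uses tightness, identification of the limit's quadratic variation, convexity of $G_p$, and a weak-to-strong reduction.

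The differences are minor. You get lower semicontinuity of the penalty from lower semicontinuity of the convex integral functional $A\mapsto\int_0^1 G_p(\dot A_t)\,dt$ on path space plus the Portmanteau theorem; the paper instead uses Skorokhod representation and a Koml\'os-type lemma, and your route is equally valid. The identification $d\langle\log S\rangle_t=\bar\sigma^2\,dA_t$, which you only assert, is exactly what Lemma~\ref{lem3} proves. It follows from convergence of the realized quadratic variation $\Lambda^n$ together with asymptotic negligibility of the martingale $\Lambda^n-\Psi^n$, and you should write that argument out.
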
 

This limit identifies the effective continuous-time object behind exponential hedging in scaled trinomial models. The control variable is the volatility process \(\alpha\), while the running penalty $G_p(x)$ is given by the relative entropy between the Bernoulli laws \(\mathrm{Ber}(x)\) and \(\mathrm{Ber}(p)\). Thus the limit has a clear interpretation: the investor optimizes over effective volatilities, but deviations from the reference trinomial structure are penalized entropically. 

The theorem also clarifies how the limiting problem interpolates between two familiar regimes. When \(\ell\to\infty\), the penalty disappears and the limit reduces to
\[
\sup_{\alpha\in\mathcal A_{\bar{\sigma}}}\mathbb E_{\mathbb P^{\mathbb W}}[F(S^\alpha)],
\]
which can be viewed as a \(G\)-expectation \cite{P} over volatility scenarios in \([0,\bar{\sigma}]\). This corresponds to the case where the investor aims to super-replicate the payoff $F(S)$ under any market model whose volatility lies within this uncertainty interval (see \cite{STZ}).

In the case where $\ell \to 0$, it is straightforward to check that
\[
\lim_{n \to \infty} \mathcal C_n
= \mathbb{E}_{\mathbb{P}_W}\left[
F\left(S^{\sigma}\right)\right]
\]
where $\sigma \equiv \sqrt{p}\bar \sigma$.
We recover the Black--Scholes price with volatility $\sqrt{p}\bar \sigma$.

 For fixed \(\ell\), the certainty equivalent therefore lies between classical Black--Scholes pricing $
\mathbb{E}_{\mathbb{P}_W}\left[
F\left(S^{\sigma}\right)\right]
$ and super-replication $
\sup_{\alpha \in \mathcal{A}_{\bar\sigma}}
\mathbb{E}_{\mathbb{P}_W}\left[
F\left(S^{\alpha}\right)\right]
$. This makes the limit particularly appealing: it quantifies how utility-based hedging in an incomplete discrete model interpolates between risk-neutral valuation and worst-case pricing. Moreover, if the payoff function $F$ is convex, then the above $G$-expectation coincides with the Black--Scholes price with volatility $\bar\sigma$ (see the argument in Section~6 of \cite{Ks}).

Our limit theorem can be viewed as a natural extension of the limit theorems in \cite{CD,DZ}, 
where similar scaling limits were considered for the discretized Bachelier model. 
While \cite{CD} studied European contingent claims with path-dependent payoffs, 
the proof of the upper bound was incomplete. In particular, to ensure compactness of the pre-limit, 
the authors introduced an artificial term in the discrete-time hedging strategy due to technical difficulties. 
This issue was later resolved in \cite{DZ} by applying tools from stochastic control, 
but only in the case of Markovian payoffs. 
The present trinomial structure is more amenable to a probabilistic approach, 
as martingale measures in this setting are tight.

Finally, for European options with Markovian payoffs, we analyze the limiting control problem further. We derive the associated HJB equation and show that the corresponding delta-hedging strategy is asymptotically optimal for the original discrete-time problem; see Theorem~\ref{thm:2}. Thus the paper not only identifies the scaling limit, but also extracts an effective hedging rule from it. 

The rest of the paper is organized as follows. In the next section, we establish  sensitivity properties of the certainty equivalent in trinomial models. In Section \ref{sec:4}, we prove Theorem \ref{thm.1}. Section \ref{sec:5} provides, for Markovian payoffs, an HJB formulation for the right-hand side of (\ref{2.3}) and establishes an asymptotically optimal hedging strategy for this case.

\section{Properties of the Certainty Equivalent}\label{sec:2}
This section provides a sensitivity analysis of the certainty equivalent. In particular, for convex payoffs it establishes monotonicity with respect to the model parameters $p$ and $\bar \sigma$, which helps explain how the pricing/hedging problem varies across trinomial specifications. We begin with the following technical lemma.

\begin{lem}\label{lem100}
Let $h:\mathbb{R}_+ \to \mathbb{R}$ be a convex function. 
Fix $a \in \mathbb{R}$ and $x > 0$, and define $\Upsilon :[0,1]\times\mathbb R_{+}\rightarrow\mathbb R$ by
\[
\Upsilon(\hat p,\hat{\sigma})
:= \log\left(
\frac{\hat p}{2}\left(
e^{h(x(1+\hat{\sigma})) - a \hat{\sigma} x}
+
e^{h(x(1-\hat{\sigma})) + a \hat{\sigma} x}
\right)
+
(1-\hat p)e^{h(x)}
\right).
\]
Then $\Upsilon$ is non-decreasing with respect to both variables.
\end{lem}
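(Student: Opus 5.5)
Since $\log$ is increasing, it suffices to show that the argument
\[
\Phi(\hat p,\hat\sigma):=\frac{\hat p}{2}\left(e^{h(x(1+\hat\sigma))-a\hat\sigma x}+e^{h(x(1-\hat\sigma))+a\hat\sigma x}\right)+(1-\hat p)e^{h(x)}
\]
is non-decreasing in each variable. Here we implicitly need $\hat\sigma\le 1$ so that $x(1-\hat\sigma)\ge 0$ lies in the domain of $h$; I would treat $\hat\sigma\in[0,1]$ (or extend $h$ convexly if needed). The key object is the function
\[
\psi(s):=h(x(1+s))-asx,\qquad s\in[-1,1],
\]
which is convex, being a convex function of an affine map plus a linear term. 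Then $\Phi(\hat p,\hat\sigma)=\frac{\hat p}{2}\big(e^{\psi(\hat\sigma)}+e^{\psi(-\hat\sigma)}\big)+(1-\hat p)e^{\psi(0)}$, and the function $g:=e^{\psi}$ is convex, since it is the composition of a convex function with the convex non-decreasing map $\exp$.

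\emph{Monotonicity in $\hat p$.} The function $\Phi$ is affine in $\hat p$ with slope
\[
\tfrac12\big(g(\hat\sigma)+g(-\hat\sigma)\big)-g(0).
\]
This slope is nonnegative by midpoint convexity of $g$, because $0$ is the midpoint of $\pm\hat\sigma$.

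\emph{Monotonicity in $\hat\sigma$.} The function $k(s):=\frac12\big(g(s)+g(-s)\big)$ is even and convex on $[-1,1]$. An even convex function is non-decreasing on $[0,1]$: for $0\le s<t$, write $s=\lambda t+(1-\lambda)(-t)$ with $\lambda=\frac{t+s}{2t}$, so that
\[
k(s)\le \lambda k(t)+(1-\lambda)k(-t)=k(t).
\]
Since $\hat p\ge0$, the function $\Phi=\hat p\,k(\hat\sigma)+(1-\hat p)g(0)$ is therefore non-decreasing in $\hat\sigma$.

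There is no real obstacle here. The only point needing care is recognizing that the $a$-terms combine with $h$ into the single convex function $\psi$ evaluated at $\pm\hat\sigma$; once that is seen, both monotonicity statements reduce to elementary facts about the convex function $g$. One should also note the domain restriction $\hat\sigma\le1$ implicit in the statement.
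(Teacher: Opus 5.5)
Your proof is correct and uses the same key observation as the paper: the map $\Delta\mapsto e^{h(x+\Delta)-a\Delta}$ is convex. The paper reduces to differentiable $h$ by a density argument and checks the signs of $\partial_{\hat p}\Upsilon$ and $\partial_{\hat\sigma}\Upsilon$; your midpoint-convexity and even-convex-function arguments reach the same conclusion without smoothing, and your remark that $\hat\sigma\le 1$ is implicitly needed is a fair point the paper leaves unstated.
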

\begin{proof}
By applying standard density arguments, we may assume that $h$ is differentiable.
Since $h$ is convex, the function $\Delta \mapsto e^{h(x + \Delta) - a\Delta}$, $\Delta \in \mathbb R$, is also convex. Hence,
\[
\frac{\partial \Upsilon}{\partial \hat p}
=
\frac{
\frac{1}{2}\left(
e^{h(x(1+\hat{\sigma})) - a \hat{\sigma} x}
+
e^{h(x(1-\hat{\sigma})) + a \hat{\sigma} x}
\right)
-
e^{h(x)}
}{
\frac{\hat p}{2}\left(
e^{h(x(1+\hat{\sigma})) - a \hat{\sigma} x}
+
e^{h(x(1-\hat{\sigma})) + a \hat{\sigma} x}
\right)
+
(1-\hat p)e^{h(x)}
}
\geq 0
\]
and
\[
\frac{\partial \Upsilon}{\partial \hat{\sigma}}
=
\frac{
\frac{\hat p x}{2}\left(
\left(e^{h(x+\Delta)-a \Delta}\right)'_{\Delta=\hat{\sigma} x}
-
\left(e^{h(x+\Delta)-a \Delta}\right)'_{\Delta=-\hat{\sigma} x}
\right)
}{
\frac{\hat p}{2}\left(
e^{h(x(1+\hat{\sigma})) - a \hat{\sigma} x}
+
e^{h(x(1-\hat{\sigma})) + a \hat{\sigma} x}
\right)
+
(1-\hat p)e^{h(x)}
}
\geq 0.
\]
This completes the proof. 
\end{proof}

Next, fix $n\in\mathbb N$. Let $\hat F:\mathbb R^{n+1}_{+} \rightarrow\mathbb R$ be a given function. Consider the $n$-step trinomial model with payoff (of a European contingent claim) given by
\[
\hat{F}_n := \hat{F}(S^n_0, S^n_1, \ldots, S^n_n).
\]
Without loss of generality, we assume exponential utility with risk aversion normalized to~one. The corresponding certainty equivalent is given by
\[
\hat {\mathcal C}_n := \inf_{\gamma \in \Gamma_n}\log\left(
\mathbb{E}_{\mathbb{P}}
\left[
\exp\left(\hat F_n - V^{\gamma}\right)
\right]\right).
\]
Define the functions $U^{p,\bar\sigma}_k$, $k=0,1,...,n$ by backward recursion:
\begin{align}\label{200}
&U^{p,\bar\sigma}_n(z_0, \ldots, z_n):= \hat F_n(z_0, \ldots, z_n), \nonumber\\
&U^{p,\bar\sigma}_{k-1}(z_0, \ldots, z_{k-1})\\
&= \inf_{a \in \mathbb{R}} \log \Bigg(
\frac{p}{2} \Big[
\exp\Big(
U^{p,\bar\sigma}_k(z_0, \ldots, z_{k-1}, z_{k-1}(1+\bar{\sigma}/\sqrt n))
- a \bar{\sigma} z_{k-1}
\Big) \nonumber\\
&\qquad\qquad
+ \exp\Big(
U^{p,\bar\sigma}_k(z_0, \ldots, z_{k-1}, z_{k-1}(1-\bar{\sigma}/\sqrt n))
+ a \bar{\sigma} z_{k-1}
\Big)
\Big] \nonumber \\
&\qquad
+ (1-p)\, \exp\Big(
U^{p,\bar\sigma}_k(z_0, \ldots, z_{k-1}, z_{k-1})
\Big)
\Bigg). \nonumber
\end{align}
From standard dynamic programming in discrete time and space we get 
\begin{equation}\label{100}
\hat {\mathcal C}_n=U^{p,\bar\sigma}_0(S_0).
\end{equation}

For each $k$, the function $U^{p,\bar{\sigma}}_{k-1}(z_0, \ldots, z_{k-1})$ is the certainty equivalent
at a given node for a one-period trinomial model with random payoff
$$
U_k^{p,\bar\sigma}\big(z_0, \ldots, z_{k-1}, z_{k-1}(1+\bar{\sigma}\xi_k/\sqrt{n})\big).
$$
Clearly, a probability measure is a martingale measure for our trinomial model
if and only if it assigns equal probabilities to the up and down moves at each node.
Thus, from the dual representation of the certainty equivalent in a one-period model
(see Chapter~3 in \cite{FL}), we obtain
\begin{align}\label{101}
&U^{p,\bar{\sigma}}_{k-1}(z_0, \ldots, z_{k-1})
= \sup_{0<q<1} \Bigg(
\frac{q}{2} \Big[
U^{p,\bar{\sigma}}_k\big(z_0, \ldots, z_{k-1},
z_{k-1}(1+\tfrac{\bar{\sigma}}{\sqrt{n}})\big) \nonumber\\
&+ U^{p,\bar{\sigma}}_k\big(z_0, \ldots, z_{k-1},
z_{k-1}(1-\tfrac{\bar{\sigma}}{\sqrt{n}})\big)
\Big] + (1-q)\,
U^{p,\bar{\sigma}}_k\big(z_0, \ldots, z_{k-1}, z_{k-1}\big)\nonumber\\
&-q\log\left(\frac{q}{p}\right)-(1-q)\log\left(\frac{1-q}{1-p}\right)
\Bigg).
\end{align}

We arrive at the following corollary.

\begin{cor}\label{cor1}
Assume that $\hat F_n$ is convex. Then, for all $k$, the function $U^{p,\bar{\sigma}}_k$ is convex.
\end{cor}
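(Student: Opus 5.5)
Backward induction on $k$, using the dual representation \eqref{101}. The base case $k=n$ is the hypothesis, since $U^{p,\bar\sigma}_n=\hat F_n$ is convex. For the induction step, assume $U^{p,\bar\sigma}_k$ is convex on $\mathbb R^{k+1}_+$ (or on the relevant domain where $z_{k-1}(1\pm\bar\sigma/\sqrt n)>0$). We must show that $U^{p,\bar\sigma}_{k-1}$ is convex in $(z_0,\dots,z_{k-1})$.

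Fix $q\in(0,1)$ and consider the three maps
\[
z\mapsto U^{p,\bar\sigma}_k\big(z_0,\dots,z_{k-1},z_{k-1}(1+\tfrac{\bar\sigma}{\sqrt n})\big),\quad
z\mapsto U^{p,\bar\sigma}_k\big(z_0,\dots,z_{k-1},z_{k-1}(1-\tfrac{\bar\sigma}{\sqrt n})\big),\quad
z\mapsto U^{p,\bar\sigma}_k(z_0,\dots,z_{k-1},z_{k-1}).
\]
Each is the composition of the convex function $U^{p,\bar\sigma}_k$ with a linear map $\mathbb R^{k}\to\mathbb R^{k+1}$, namely $(z_0,\dots,z_{k-1})\mapsto(z_0,\dots,z_{k-1},c z_{k-1})$ with $c\in\{1\pm\bar\sigma/\sqrt n,1\}$. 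Hence each is convex. The expression inside the supremum in \eqref{101} is a combination of these three maps with the nonnegative weights $q/2,\,q/2,\,1-q$, minus the entropy term $q\log(q/p)+(1-q)\log((1-q)/(1-p))$. That entropy term does not depend on $z$, so the whole expression is convex in $z$ for each fixed $q$.

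$U^{p,\bar\sigma}_{k-1}$ is the pointwise supremum over $q\in(0,1)$ of this family of convex functions. It is therefore convex, provided it is finite, and finiteness is automatic: by \eqref{200} it is the logarithm of a finite expression evaluated at a particular $a$, and the supremum in \eqref{101} is bounded above because the objective is bounded in $q$. This closes the induction.

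The only point requiring some care is the validity of \eqref{101} itself, which is taken as given from the one-period duality. One should also note that the domain is respected: for $n$ large, $1\pm\bar\sigma/\sqrt n>0$, so the linear maps send $\mathbb R^k_+$ into $\mathbb R^{k+1}_+$, and convexity on this convex domain is preserved. I do not expect a real obstacle. An alternative route via \eqref{200} (infimal log-sum-exp) would be harder, since an infimum over $a$ does not obviously preserve convexity. That is exactly why the dual form \eqref{101} is the right tool here.
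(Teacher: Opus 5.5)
Your argument is correct and is the paper's proof: backward induction through the dual representation \eqref{101}. For each fixed $q$ the objective is a nonnegative combination of $U^{p,\bar\sigma}_k$ composed with linear maps, minus a $z$-independent entropy term, and a pointwise supremum of convex functions is convex. You have written out the details the paper leaves implicit, including the finiteness and positive-domain checks.
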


\begin{proof}
The statement follows by backward induction using the representation \eqref{101}.
\end{proof}

\begin{lem}\label{lem.new}
Assume that $\hat F_n$ is convex. Let $1 > p_2 \geq p_1 > 0$ and $\bar{\sigma}_2 \geq \bar{\sigma}_1 > 0$. Then
$
U^{p_2,\bar{\sigma}_2}_k \geq U^{p_1,\bar{\sigma}_1}_k
$
for all $k = 0,1,\dots,n$.
\end{lem}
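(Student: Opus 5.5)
We argue by backward induction on $k$, going from $k=n$ down to $k=0$, and combine Corollary~\ref{cor1} with Lemma~\ref{lem100}.

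\emph{Base case.} For $k=n$ both functions coincide with $\hat F_n$, so the inequality holds with equality.

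\emph{Inductive step.} Assume $U^{p_2,\bar\sigma_2}_k\ge U^{p_1,\bar\sigma_1}_k$ pointwise. Fix $z_0,\dots,z_{k-1}$, write $x:=z_{k-1}>0$, and introduce the one-variable function
\[
h(y):=U^{p_1,\bar\sigma_1}_k(z_0,\dots,z_{k-1},y),\qquad y>0 .
\]
By Corollary~\ref{cor1}, applied with parameters $(p_1,\bar\sigma_1)$, the function $U^{p_1,\bar\sigma_1}_k$ is convex, so $h$ is convex. To fit the form of Lemma~\ref{lem100}, rescale the volatility: put $\hat\sigma:=\bar\sigma/\sqrt n$ and replace $a$ by $a\sqrt n$. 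The two exponents in \eqref{200} become $h(x(1\pm\hat\sigma))\mp a\hat\sigma x$. This reparametrization is harmless because $a$ ranges over all of $\mathbb R$. We then have, for every $a$,
\[
\Upsilon_a(\hat p,\hat\sigma):=\log\Big(\tfrac{\hat p}{2}\big(e^{h(x(1+\hat\sigma))-a\hat\sigma x}+e^{h(x(1-\hat\sigma))+a\hat\sigma x}\big)+(1-\hat p)e^{h(x)}\Big),
\]
and
\[
U^{p_1,\bar\sigma_1}_{k-1}(z_0,\dots,z_{k-1})=\inf_{a}\Upsilon_a\big(p_1,\bar\sigma_1/\sqrt n\big).
\]
Lemma~\ref{lem100} says that $\Upsilon_a(p_1,\bar\sigma_1/\sqrt n)\le \Upsilon_a(p_2,\bar\sigma_2/\sqrt n)$ for each fixed $a$. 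Taking the infimum over $a$ on both sides preserves this inequality.

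Now pass from $h$ to $U^{p_2,\bar\sigma_2}_k$. By the induction hypothesis,
\[
h(y)\le U^{p_2,\bar\sigma_2}_k(z_0,\dots,z_{k-1},y)\qquad\text{for all } y .
\]
The expression inside the logarithm is increasing in each of the three values of $h$, since its coefficients $\hat p/2$ and $1-\hat p$ are nonnegative and $\exp$, $\log$ are increasing. Hence, for each $a$, replacing $h$ by $U^{p_2,\bar\sigma_2}_k$ in the $(p_2,\bar\sigma_2)$-expression can only increase it. Taking the infimum over $a$ gives
\[
\inf_a\Upsilon_a\big(p_2,\bar\sigma_2/\sqrt n\big)\le U^{p_2,\bar\sigma_2}_{k-1}(z_0,\dots,z_{k-1}).
\]
Chaining the two inequalities yields $U^{p_1,\bar\sigma_1}_{k-1}\le U^{p_2,\bar\sigma_2}_{k-1}$, which completes the induction.

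\emph{The delicate point.} Convexity must be applied to the parameter pair $(p_1,\bar\sigma_1)$, that is, to the smaller function, \emph{before} the parameters are changed. One should not try to use convexity of $U^{p_2,\bar\sigma_2}_k$ instead. The order matters because Lemma~\ref{lem100} compares two parameter pairs for one and the same convex $h$, whereas the induction hypothesis compares two different functions at the same parameters.

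A second, minor point is the domain. The convexity in Corollary~\ref{cor1} is joint convexity on $\mathbb R_+^{k+1}$, so its restriction to the last coordinate is convex on $(0,\infty)$. For $n$ large we have $\bar\sigma_2/\sqrt n<1$, so every argument $x(1\pm\hat\sigma)$ that appears stays in $(0,\infty)$. On this domain Lemma~\ref{lem100} applies, including its density and differentiability reduction.
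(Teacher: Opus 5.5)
Your proof is correct and is essentially the paper's argument: backward induction, using the induction hypothesis together with monotonicity of the expression in the values of $U_k$, and Lemma~\ref{lem100} applied to the convex function $U^{p_1,\bar\sigma_1}_k$ for each fixed $a$ before taking the infimum; your explicit rescaling of $a$ also makes precise a point the paper glosses over. One correction to your closing remark: using convexity of $U^{p_2,\bar\sigma_2}_k$ instead would work equally well, since Corollary~\ref{cor1} applies to every parameter pair, provided you apply Lemma~\ref{lem100} first and then the induction hypothesis at the parameters $(p_1,\bar\sigma_1)$.
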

\begin{proof}
The proof proceeds by backward induction. For $k = n$, the statement is immediate. Assume that the statement holds for some $k$, and prove it for $k-1$.
From the induction assumption and the dynamic programming (\ref{200})
we obtain
\begin{align*}\label{201}
&U^{p_2,\bar\sigma_2}_{k-1}(z_0, \ldots, z_{k-1})\\
&\geq\inf_{a \in \mathbb{R}} \log \Bigg(
\frac{p_2}{2} \Big[
\exp\Big(
U^{p_1,\bar\sigma_1}_k(z_0, \ldots, z_{k-1}, z_{k-1}(1+\bar{\sigma}_2/\sqrt n))
- a \bar{\sigma}_2 z_{k-1}
\Big) \nonumber\\
&\qquad\qquad
+ \exp\Big(
U^{p_1,\bar\sigma_1}_k(z_0, \ldots, z_{k-1}, z_{k-1}(1-\bar{\sigma}_2/\sqrt n))
+ a \bar{\sigma}_2 z_{k-1}
\Big)
\Big] \nonumber \\
&\qquad
+ (1-p_2)\, \exp\Big(
U^{p_1,\bar\sigma_1}_k(z_0, \ldots, z_{k-1}, z_{k-1})
\Big)
\Bigg)\\
&\geq\inf_{a \in \mathbb{R}} \log \Bigg(
\frac{p_1}{2} \Big[
\exp\Big(
U^{p_1,\bar\sigma_1}_k(z_0, \ldots, z_{k-1}, z_{k-1}(1+\bar{\sigma}_1/\sqrt n))
- a \bar{\sigma}_1 z_{k-1}
\Big) \nonumber\\
&\qquad\qquad
+ \exp\Big(
U^{p_1,\bar\sigma_1}_k(z_0, \ldots, z_{k-1}, z_{k-1}(1-\bar{\sigma}_1/\sqrt n))
+ a \bar{\sigma}_1 z_{k-1}
\Big)
\Big] \nonumber \\
&\qquad
+ (1-p_1)\, \exp\Big(
U^{p_1,\bar\sigma_1}_k(z_0, \ldots, z_{k-1}, z_{k-1})
\Big)
\Bigg)
\end{align*}
where the last inequality follows from Lemma \ref{lem100} and the convexity of $U^{p_1,\bar\sigma_1}_k$.
\end{proof}

We conclude this section with the following corollary.

\begin{cor}
Assume that $\hat F_n$ is convex, and fix $n \in \mathbb{N}$. Then the certainty equivalent $\hat {\mathcal C}_n$ is non-decreasing as a function of $p$ and $\bar{\sigma}$.
\end{cor}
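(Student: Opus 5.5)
This is a direct consequence of Lemma~\ref{lem.new} combined with the dynamic programming identity \eqref{100}. Fix $n$ and a convex payoff function $\hat F$. Take two parameter pairs $(p_1,\bar\sigma_1)$ and $(p_2,\bar\sigma_2)$ with $p_2\ge p_1$ and $\bar\sigma_2\ge\bar\sigma_1$, both admissible: $0<p_i<1$, and $n$ large enough that $1-\bar\sigma_i/\sqrt n>0$. Write $\hat{\mathcal C}_n(p,\bar\sigma)$ for the certainty equivalent in the trinomial model with parameters $(p,\bar\sigma)$.

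By \eqref{100}, applied separately in each model, $\hat{\mathcal C}_n(p_i,\bar\sigma_i)=U^{p_i,\bar\sigma_i}_0(S_0)$. The functions $U^{p,\bar\sigma}_k$ are defined by the recursion \eqref{200} on the positive orthant. Their terminal condition is $U_n=\hat F$, which does not depend on the parameters; $p$ and $\bar\sigma$ enter only through the recursion. So the comparison in Lemma~\ref{lem.new} is a comparison of functions on a common domain. Taking $k=0$ there and evaluating at $z_0=S_0$ gives
\[
\hat{\mathcal C}_n(p_2,\bar\sigma_2)=U^{p_2,\bar\sigma_2}_0(S_0)\ \ge\ U^{p_1,\bar\sigma_1}_0(S_0)=\hat{\mathcal C}_n(p_1,\bar\sigma_1).
\]
Monotonicity in each variable separately follows by setting $p_1=p_2$ or $\bar\sigma_1=\bar\sigma_2$.

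The only point needing care is that Lemma~\ref{lem.new} really applies. Its proof invokes Lemma~\ref{lem100} with $h=U^{p_1,\bar\sigma_1}_k$, viewed as a function of the last coordinate. The required convexity is supplied by Corollary~\ref{cor1}, which assumes $\hat F$ is convex.

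There is a scaling subtlety. Lemma~\ref{lem100} is stated with increments $x(1\pm\hat\sigma)$, while \eqref{200} uses $z_{k-1}(1\pm\bar\sigma/\sqrt n)$ together with the hedge term $a\bar\sigma z_{k-1}$. This is absorbed by reparametrizing $a\mapsto a\sqrt n$ and taking $\hat\sigma=\bar\sigma/\sqrt n$. Since $\hat\sigma\mapsto\bar\sigma/\sqrt n$ is increasing, monotonicity in $\hat\sigma$ transfers to monotonicity in $\bar\sigma$.

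Finally, the inequality in Lemma~\ref{lem100} must hold for each fixed $a$ before taking the infimum over $a$. It does, because Lemma~\ref{lem100} is stated for arbitrary fixed $a$. I expect no genuine obstacle beyond checking these bookkeeping points.
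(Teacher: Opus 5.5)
Your proposal is correct and matches the paper's proof: apply Lemma~\ref{lem.new} at $k=0$, evaluate at $z_0=S_0$, and use the identity \eqref{100} in each model. Your extra bookkeeping is accurate and makes explicit what the paper leaves implicit: the comparison is between functions on a common domain, convexity comes from Corollary~\ref{cor1}, and matching \eqref{200} to Lemma~\ref{lem100} needs $\hat\sigma=\bar\sigma/\sqrt n$ with $a\mapsto a\sqrt n$.
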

\begin{proof}
The result follows by combining \eqref{100} with Lemma \ref{lem.new} for $k = 0$.
\end{proof}

\section{Proof of Theorem~\ref{thm.1}}\label{sec:4}

For any $n\in\mathbb N$ denote by $\mathcal Q_n$ the set of all probability measures $\mathbb Q$ on $\mathcal F_n$ such that 
$\{S^n_k\}_{k=0}^n$ is $\mathbb Q$-martingale. Observe that $\mathbb Q_n\in\mathcal Q_n$ if and only if 
$$\mathbb Q_n(\xi_k=1|\mathcal F_{k-1})=\mathbb Q_n(\xi_k=-1|\mathcal F_{k-1}) \ \ k=1,...n.$$
\begin{lem}\label{lem2}
(i). For any $m\in\mathbb N$ we have 
\begin{equation}\label{15}
\sup_{n\in\mathbb N}\sup_{\mathbb Q_n\in\mathcal Q_n}\mathbb E_{\mathbb Q_n}\left[\left(\max_{0\leq k\leq n}S^n_k\right)^{2m}\right]<\infty, 
\end{equation}
\begin{equation}\label{eq:reciprocal}
\sup_{n\in\mathbb N}\sup_{\mathbb Q_n\in\mathcal Q_n}\mathbb E_{\mathbb Q_n}\left[\left(\max_{0\leq k\leq n}1/S^n_k\right)^{2m}\right]<\infty, 
\end{equation}
(ii). For any sequence of probability measure $\mathbb Q_n\in\mathcal Q_n$, $n\in\mathbb N$ the sequence of distributions 
$(H^n(S^n);\mathbb Q_n)$, $n\in\mathbb N$ is tight in the space $C[0,1]$. 
\end{lem}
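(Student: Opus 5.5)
Under any $\mathbb Q_n\in\mathcal Q_n$ the increments $\xi_k$ satisfy $\mathbb E_{\mathbb Q_n}[\xi_k\mid\mathcal F_{k-1}]=0$ and $\xi_k^2\le 1$, and $S^n$ is a positive martingale. I would get all moment bounds from exact exponential-type supermartingales, uniformly in $\mathbb Q_n$, and then deduce tightness by Kolmogorov's criterion.

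\textbf{Part (i), bound \eqref{15}.} For $q=2m$ write
\[
\mathbb E_{\mathbb Q_n}\bigl[(S^n_k)^{q}\mid\mathcal F_{k-1}\bigr]=(S^n_{k-1})^{q}\,\mathbb E_{\mathbb Q_n}\Bigl[\Bigl(1+\tfrac{\bar\sigma}{\sqrt n}\xi_k\Bigr)^{q}\Bigm|\mathcal F_{k-1}\Bigr].
\]
Since the conditional law of $\xi_k$ is symmetric on $\{-1,0,1\}$ with weight $\pi_k=\mathbb Q_n(\xi_k\neq0\mid\mathcal F_{k-1})$, the factor equals $1+\frac{\pi_k}{2}\bigl[(1+\bar\sigma/\sqrt n)^q+(1-\bar\sigma/\sqrt n)^q-2\bigr]$. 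This is at most $1+c_q/n$, because the odd terms cancel and the bracket is $O(1/n)$ uniformly in $n$. Iterating gives $\mathbb E_{\mathbb Q_n}[(S^n_n)^q]\le S_0^q(1+c_q/n)^n\le S_0^qe^{c_q}$. Doob's $L^q$ maximal inequality for the nonnegative submartingale $S^n$ then bounds the maximum by $(q/(q-1))^q S_0^qe^{c_q}$.

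\textbf{Part (i), bound \eqref{eq:reciprocal}.} $1/S^n$ is not a martingale, but it is a positive submartingale by Jensen. The same one-step computation gives a factor $1+\frac{\pi_k}{2}\bigl[(1+\bar\sigma/\sqrt n)^{-q}+(1-\bar\sigma/\sqrt n)^{-q}-2\bigr]\le 1+c'_q/n$ for $n$ large, with $\bar\sigma/\sqrt n\le 1/2$. The finitely many small $n$ are harmless since $S^n$ is bounded away from $0$ there. Doob's inequality applied to the submartingale $(S^n_k)^{-1}$ yields the claim.

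\textbf{Part (ii), tightness.} The starting point $S_0$ is fixed, so it suffices to show a modulus bound: for grid points $j<k$,
\[
\mathbb E_{\mathbb Q_n}\bigl[|S^n_k-S^n_j|^4\bigr]\le C\Bigl(\frac{k-j}{n}\Bigr)^2 .
\]
By Burkholder's inequality for the martingale $S^n$,
\[
\mathbb E|S^n_k-S^n_j|^4\le C\,\mathbb E\Bigl[\Bigl(\sum_{i=j+1}^{k}\tfrac{\bar\sigma^2}{n}(S^n_{i-1})^2\xi_i^2\Bigr)^2\Bigr]\le C\Bigl(\tfrac{k-j}{n}\Bigr)^2\,\mathbb E\bigl[\max_i (S^n_i)^4\bigr],
\]
and the last expectation is bounded by \eqref{15}. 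For the linear interpolation $H^n(S^n)$, the same bound $\mathbb E|H^n(S^n)_t-H^n(S^n)_s|^4\le C|t-s|^2$ follows for all $s,t$:
\begin{itemize}
\item within one cell, the increment is a fraction of a single increment, whose fourth moment is $O(1/n^2)$, and $(|t-s|n)^4/n^2\le |t-s|^2$ when $|t-s|\le 1/n$;
\item across cells, split at the grid points and combine the two estimates.
\end{itemize}
Kolmogorov--Chentsov tightness then gives (ii).

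The main obstacle is the reciprocal bound, since $1/S^n$ is not a martingale. The key point there is that the second-order Taylor term of $(1\pm x)^{-q}$ is $O(1/n)$ uniformly, which makes the growth per step $1+O(1/n)$. Otherwise the steps are routine.
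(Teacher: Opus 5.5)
Your proposal is correct and follows essentially the same route as the paper. For (i) it uses one-step conditional moment factors of order $1+c/n$, obtained from $\mathbb E_{\mathbb Q_n}[\xi_k\mid\mathcal F_{k-1}]=0$, followed by Doob's inequality; for (ii) it combines Burkholder--Davis--Gundy with the fourth-moment bound from (i) in Kolmogorov's criterion. The differences are only cosmetic: you compute $(1\pm\bar\sigma/\sqrt n)^{\pm q}$ exactly via the symmetric conditional law instead of using $1/(1+x)\le 1-x+2x^2$, and you split cross-cell increments at grid points rather than absorbing the endpoint pieces into a $2\bar\sigma n^{-1/2}\max_k S^n_k$ correction.
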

\begin{proof}
\emph{(i)}. Choose $m\in \mathbb Z$.
Fix $n$ and $\mathbb Q_n\in\mathcal Q_n$. From the 
equality $\mathbb E_{\mathbb Q_n}[\xi_k | \mathcal{F}_{k-1}]=0$ it follows that 
 there exists a constant $c=c(m)$ (depends only on $m$) such that 
\begin{align*}
\mathbb E_{\mathbb Q_n}\left[(S^n_k)^{2m}\left|\right.\mathcal F_{k-1}\right]&= (S^n_{k-1})^{2m}  \, \sum_{i=0}^{2m} \binom{2m}{i} \,\mathbb E_{\mathbb Q_n}[(\bar{\sigma}\xi_k/\sqrt{n})^{i}\left|\right.\mathcal F_{k-1}] \\
&\leq (S^n_{k-1})^{2m} (1+c/n).
\end{align*}
This together with the Doob's maximal inequality for the sub-martingale 
$(S^n_k)^{2m}$, $k=0,1...,n$ 
gives \eqref{15}.

The proof of \eqref{eq:reciprocal} is similar. Since $S_k^n$ is positive and the function $\mathbb{R}_{+} \ni x \mapsto 1/x$ is convex, it follows from Jensen's inequality that $(1/S_k^n)_{k=0}^n$ is a positive submartingale. 
Moreover, it is easy to verify that for any $x \in (-1/2, 1/2)$,
$
\frac{1}{1+x} \leq 1 - x + 2x^2.
$
Hence, by applying again the identity
$
\mathbb{E}_{\mathbb{Q}_n}[\xi_k \mid \mathcal{F}_{k-1}] = 0,
$
we obtain (for sufficiently large $n$) that
\begin{align*}
\mathbb E_{\mathbb Q_n}\left[(1/S^n_{k})^{2m}\left|\right.\mathcal F_{k-1}\right]&= (1/S^n_{k-1})^{2m} \mathbb E_{\mathbb Q_n}[(1-\bar{\sigma}\xi_k/\sqrt{n}+2\bar{\sigma}^2\xi_k^2/n)^{2m}\left|\right.\mathcal F_{k-1}] \\
&\leq (1/S^n_{k-1})^{2m} (1+\hat c/n)
\end{align*}
for some $\hat c$ only depends on $m$. Then we can conclude as in the proof for \eqref{15}. \\

\noindent\emph{(ii)}. We apply the Kolmogorov tightness criterion (for details see \cite{B}). Namely, in order to prove the statement it is sufficient to establish that there 
exists a constant $C>0$ such that 
\begin{equation}\label{16}
\mathbb E_{\mathbb Q_n}\left[\left(H^n_t(S^n)-H_s^n(S^n)\right)^4\right]\leq C(t-s)^2, \ \  n\in\mathbb N, \ 0\leq s<t\leq 1.
\end{equation}
To this end fix $n\in\mathbb N$ and  $0\leq s<t\leq 1$. If $t-s\leq 1/n$
then there exists a constant $c_1>0$ such that 
$$|H^n_t(S^n)-H_s^n(S^n)|^4\leq c_1n^2(t-s)^4\left(\max_{0\leq k\leq n}S^n_k\right)^4 \leq c_1(t-s)^2\left(\max_{0\leq k\leq n}S^n_k\right)^4,$$
and thus 
\eqref{16} follows from \eqref{15}. 
Thus, assume that $t-s>1/n$.
Set $i=[ns]+1$ and $j=[nt]$.

From the  Burkholder-Davis-Gundy inequality
we obtain that there exists a constant $c_2>0$ such that 
\begin{align*}
&\mathbb E_{\mathbb Q_n}\left[\left(S^n_j-S^n_i\right)^4\right]\leq c_2\mathbb E_{\mathbb Q_n}\left[\left(\sum_{k=i}^{j-1} (S^n_{k+1}-S^n_k)^2\right)^2\right]\\
&\leq c_2\mathbb E_{\mathbb Q_n}\left[\left(\frac{ \bar\sigma^2 (j-i)}{n}(\max_{0\leq k\leq n}S^n_k)^2\right)^2\right]\\
&\leq c_2\bar\sigma^4(t-s)^2 \mathbb E_{\mathbb Q_n}\left[\left(\max_{0\leq k\leq n}S^n_k\right)^4\right].
\end{align*}
This together with the inequality (recall that $t-s>1/n$)
$$|H^n_t(S^n)-H_s^n(S^n)|\leq |S^n_j-S^n_i|+2\frac{\bar\sigma}{\sqrt n}\max_{0\leq k\leq n}S^n_k$$
gives \eqref{16}. 
\end{proof}

Next, Let $D[0,1]$ be the space of all càdlàg (right continuous with left limits) functions equipped with the Skorohod topology (see~\cite{B}). For any $n \in \mathbb N$ and $ z \in \mathbb R^{n+1}$, define 
$$L^n(z):=\left\{z_{[nt]} \right\}_{t=0}^1 \in D[0,1].$$
For any $n\in\mathbb N$ and $\mathbb Q_n\in\mathcal Q_n$, define
\begin{align*}
&\Phi^{n,\mathbb Q_n}_k:=\bar\sigma^2\mathbb E_{\mathbb Q_n}[\xi^2_k|\mathcal F_{k-1}], \ \ k=1,...,n\\
&\Psi^{n,\mathbb Q_n}_k=\frac{1}{n}\sum_{i=1}^k \Phi^{n,\mathbb Q_n}_i, \ \ k=0,1,...,n.
\end{align*}
\begin{lem}\label{lem3}
Any subsequence (still denote it by $n$ for simplicity) of 
probability measures $\mathbb Q_n\in\mathcal Q_n$, $n\in\mathbb N$ contains a further subsequence 
along which we have the weak convergence 
$$(L^n(S^n);L^n(\Psi^{n,\mathbb Q_n});\mathbb Q_n)\Rightarrow (\{M_t\}_{t=0}^1,\{\langle  \log M\rangle_t \}_{t=0}^1;\mathbb P^M) \ \mbox{on} \ D^2[0,1]$$
for some continuous and strictly positive martingale $M=(M_t)_{0\le t\le 1}$
on a suitable probability space $(\Omega^M,\mathcal{F}^M,P^M)$. As usual $\langle\cdot\rangle$ denotes the quadratic variation of 
the continuous semi-martingale $\cdot$.  
\end{lem}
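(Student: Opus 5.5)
Along the given subsequence I would first establish tightness of the pair and then identify the limit. Tightness of $L^n(S^n)$ in $D[0,1]$ follows from Lemma~\ref{lem2}(ii): the sup-distance between $L^n(S^n)$ and $H^n(S^n)$ is at most $\frac{\bar\sigma}{\sqrt n}\max_k S^n_k$, which tends to zero in probability by \eqref{15}. The process $L^n(\Psi^{n,\mathbb Q_n})$ is nondecreasing with increments $\Phi^{n,\mathbb Q_n}_k/n\in[0,\bar\sigma^2/n]$. Its piecewise-linear interpolation is therefore $\bar\sigma^2$-Lipschitz, so this component is tight with continuous limit points. Prokhorov then gives a further subsequence with $(L^n(S^n),L^n(\Psi^n))\Rightarrow (M,A)$. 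Here both $M$ and $A$ are continuous, $A$ is $\bar\sigma^2$-Lipschitz and nondecreasing, and by the portmanteau theorem together with \eqref{eq:reciprocal}, $\inf_t M_t>0$ a.s. By Skorohod representation I may assume the convergence holds almost surely, uniformly in $t$, since the limits are continuous.

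Next I would show that $M$ is a martingale with respect to the filtration generated by $(M,A)$. For $0\le s_1<\dots<s_j\le s<t$ and bounded continuous $\phi$, the discrete martingale property gives $\mathbb E_{\mathbb Q_n}[(S^n_{[nt]}-S^n_{[ns]})\phi(S^n_{[ns_i]},\Psi^n_{[ns_i]})]=0$. Uniform integrability from \eqref{15} then lets me pass to the limit and obtain $\mathbb E[(M_t-M_s)\phi(\dots)]=0$.

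The main step is identifying $A=\langle \log M\rangle$. Let $X^n_k:=\log S^n_k$. The increments are $\log(1+\bar\sigma\xi_k/\sqrt n)=\bar\sigma\xi_k/\sqrt n-\bar\sigma^2\xi_k^2/(2n)+O(n^{-3/2})$. I would decompose
\[
X^n_k=\log S_0+N^n_k-\tfrac12\Psi^n_k+R^n_k,
\]
where $N^n$ is the $\mathbb Q_n$-martingale
\[
N^n_k=\sum_{i\le k}\Big(\tfrac{\bar\sigma\xi_i}{\sqrt n}-\tfrac{\bar\sigma^2}{2n}\big(\xi_i^2-\mathbb E_{\mathbb Q_n}[\xi_i^2|\mathcal F_{i-1}]\big)\Big)
\]
and $|R^n_k|\le C n^{-1/2}$ uniformly. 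The centered $\xi_i^2$ part has conditional variance of order $n^{-2}$, so by Doob it contributes $O(n^{-1/2})$ in $L^2$. Hence $N^n_k=\sum_{i\le k}\bar\sigma\xi_i/\sqrt n+o(1)$, whose predictable quadratic variation is exactly $\Psi^n_k$.

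It follows that $(N^n)^2-\Psi^n$ is a $\mathbb Q_n$-martingale. The increments of $N^n$ are bounded by $C/\sqrt n$, so $\mathbb E[\sup_k|N^n_k|^4]$ is uniformly bounded by Burkholder-Davis-Gundy. Since $X^n=\log S^n$ is a continuous functional of the a.s. convergent processes, I get $N^n\to N:=\log M-\log S_0+\tfrac12 A$ uniformly a.s. Passing to the limit as in the previous step, with uniform integrability supplied by the fourth-moment bound, shows that both $N$ and $N^2-A$ are martingales. Since $A$ is continuous, adapted and of finite variation, this gives $\langle N\rangle=A$. Finally, $\log M=\log S_0+N-\tfrac12 A$ is a continuous semimartingale whose finite-variation part does not affect the bracket, so $\langle\log M\rangle=\langle N\rangle=A$. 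Joint convergence of the pair in $D^2[0,1]$ then follows from joint convergence of the interpolations to continuous limits. The step I expect to need the most care is the uniform integrability of $(N^n)^2$ and the control of the error terms $R^n$ and of the centered $\xi^2$ part.
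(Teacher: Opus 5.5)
Your argument is correct, but it takes a different route from the paper.

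\textbf{The paper's route.} The paper extracts a convergent subsequence only for $L^n(S^n)$ and obtains everything else as limits of discrete stochastic integrals. It invokes the Duffie--Protter convergence theorem twice. The first application gives $N^n_k=\frac{\bar\sigma}{\sqrt n}\sum_{i\le k}\xi_i=\int dS^n/S^n_{-}\Rightarrow N=\int dM/M$. The second, through the identity $\Lambda^n=(N^n)^2-2\sum N^n_{i-1}\Delta N^n_i$, gives the realized variance $\Lambda^n=\frac{\bar\sigma^2}{n}\sum\xi_i^2\Rightarrow\langle N\rangle=\langle\log M\rangle$. Finally it shows $\max_k|\Lambda^n_k-\Psi^n_k|\to0$ by BDG.

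\textbf{Your route.} You prove tightness of $\Psi^n$ directly, since its interpolation is $\bar\sigma^2$-Lipschitz, and take a joint limit $(M,A)$. You then identify $A$ by a martingale-problem argument. The Taylor expansion of $\log(1+x)$ gives the decomposition $\log S^n=\log S_0+N^n-\frac12\Psi^n+R^n$. You pass to the limit in the martingale properties of $N^n$ and $(N^n)^2-\Psi^n$ using the fourth-moment bound, and conclude by uniqueness of the compensator.

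\textbf{What each buys.} The paper's version is shorter on the page and needs neither tightness of $\Psi^n$ nor a compensator-uniqueness step. However, it leans on an external theorem whose hypotheses (goodness of the integrators, joint convergence of integrands and integrators) are not spelled out. Yours is elementary and self-contained, works directly with the predictable quantity $\Psi^n$ rather than detouring through $\Lambda^n$, and shows for free that $\langle\log M\rangle$ is absolutely continuous with density in $[0,\bar\sigma^2]$. The price is that you must work in the filtration generated by $(M,A)$, which you do. Since quadratic variation of a continuous semimartingale does not depend on the filtration, the bracket you compute there is the one the lemma asks for.

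\textbf{One small inaccuracy.} With your definition of $N^n$, which includes the centered $\xi_i^2$ correction, $(N^n)^2-\Psi^n$ is not exactly a martingale. The cross term in $\mathbb E_{\mathbb Q_n}[(\Delta N^n_i)^2\mid\mathcal F_{i-1}]$ does vanish, because $\xi_i^3=\xi_i$ has zero conditional mean. But the squared correction leaves the nonnegative predictable term $\frac{\bar\sigma^4}{4n^2}\sum_{i\le k}\mathrm{Var}_{\mathbb Q_n}(\xi_i^2\mid\mathcal F_{i-1})\le\bar\sigma^4/(16n)$. This term vanishes uniformly, so your conclusion that $N^2-A$ is a martingale is unaffected. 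Alternatively, use $\sum_{i\le k}\bar\sigma\xi_i/\sqrt n$, for which the statement is exact, and treat the centered part as an error tending to zero uniformly in probability.
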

\begin{proof}
By Lemma~\ref{lem2}, $(H^n(S^n); \mathbb Q_n)$ is tight in $C[0,1]$. Therefore, there exists a weak limit $(\{M_t\}_{t=0}^1;\mathbb P^M)$ defined on a probability space $(\Omega^M,\mathcal{F}^M,P^M)$, which is also a weak limit of $(L^n(S^n); \mathbb Q_n)$ in $D[0,1]$. As $(L^n(S_n);\mathbb Q_n)$ is a martingale, together with the uniform integrability of marginal distributions, the weak limit is  a non-negative martingale.

Let us prove that $\{M_t\}_{t=0}^1$ is strictly positive. Define a function $I$ on $D[0,1]$ via $I(\omega):=\inf_{t \in [0,1]}  \omega_t$. It can be shown that $I$ is continuous under the Skorokhod topology. Hence, from the Fatou lemma and \eqref{eq:reciprocal} it follows that $M$ is strictly positive. 
Therefore $$(L^n(S^n),L^n(1/S^n); \mathbb Q_n) \Rightarrow (M,1/M; \mathbb P^M).$$

Take $N_t:=\int_{0}^t\frac{dM_u}{M_u}$, $t\in [0,1]$, and take
\begin{align*}
    N^n_k:= \int_0^{k/n} \frac{dL^n(S^n)_t}{L^n(S^n)_t}=\frac{\bar\sigma}{\sqrt{n}}\sum_{i=1}^k \xi_i, \quad k=1, \dotso, n.
\end{align*}
Invoking \cite[Theorem 4.3]{DP},  we obtain the weak convergence $$\left( L^n(S^n),L^n(N^n); \, \mathbb Q_n\right) \Rightarrow \left(M,N ; \mathbb P^{M} \right).$$ 

Clearly, $\{N^n_k\}_{k=1}^n$ is a $\mathbb Q_n$-martingale with uniformly bounded jumps. Define its quadratic variation process 
$\Lambda^n_k=\frac{\bar\sigma^2}{n} \sum_{i=1}^k \xi^2_i$, $k=0,1,...,n$. Hence, from Theorem 4.3 in \cite {DP} and the equalities 
\begin{align*}
&\Lambda^n_k=|N^n_k|^2-2\sum_{i=1}^k N^n_{i-1}(N^n_i-N^n_{i-1}), \,  k=1,...,n,\\
&\langle  \log M\rangle_t=\langle N\rangle_t=N^2_t-2\int_{0}^t N_u dN_u, \ t\in [0,1] 
\end{align*}
we obtain 
$$\left(L^n(S^n), L^n(\Lambda^n);\mathbb Q_n\right)\Rightarrow 
(M,  \langle\log M\rangle ;\mathbb P^M).$$

Thus, in order to complete the proof it remains to show that 
\begin{equation}\label{10+}
\lim_{n\rightarrow\infty}\max_{0\leq k\leq n}|\Lambda^n_k-\Psi^{n,\mathbb Q_n}_k|= 0 \ \mbox{in} \ \mbox{probability}.
\end{equation}
To this end we observe that for any $n$ the process $\Lambda^n_k-\Psi^{n,\mathbb Q_n}_k$, $0\leq k\leq n$ is a martingale and the corresponding quadratic variation satisfies 
 $$\sum_{k=1}^n \left(\Lambda^n_k-\Psi^{n,\mathbb Q_n}_k-(\Lambda^n_{k-1}-\Psi^{n,\mathbb Q_n}_{k-1})\right)^2\leq  4\bar{\sigma}^2/n.$$
This together with the Burkholder-Davis-Gundy inequality gives \eqref{10+}. 
\end{proof}

Next, let $(W_t)_{[0,1]}$ be the canonical process on $D[0,1]$, and let $\mathcal A^{0}_{\bar\sigma}\subset\mathcal A_{\bar\sigma}$
be the subset of all $\alpha\in  A_{\bar\sigma}$ of the form
\begin{equation} \label{7}
\alpha_t:= \phi_0\mathbb I_{[t_0,t_1)}+\sum_{j=1}^{J-1} \phi_{j-1}\left(W_{t_0},\dots,W_{t_{j-1}}\right) \mathbb I_{t\in [t_{j},t_{j+1})}, \quad t\in [0,T], 
\end{equation} 
where $0=t_0<t_1<\dots<t_J=T$ is a finite deterministic partition of $[0,T]$, $\phi_0>0$ is a constant and for each $j=1,...,J-1$,
$\phi_j:\mathbb R^{j}\rightarrow (\epsilon,\bar\sigma)$, (for some $\epsilon>0$) is continuous. Without loss of generality 
we assume that $\phi_0>\epsilon$. 
\begin{lem}\label{lem1}
Let $\alpha\in\mathcal A^0_{\bar\sigma}$. There exists a sequence of probability measures 
$\mathbb Q_n\in\mathcal Q_n$, $n\in\mathbb N$ such that we have the weak convergence 
\begin{equation}\label{9}
(L^n(S^n),L^n(\Phi^{n,\mathbb Q_n});\mathbb Q_n)\Rightarrow (\{S^{\alpha}_t\}_{t=0}^1,\{ \alpha^2_t\}_{t=0}^1;\mathbb P_W) \ \mbox{on} \ D^2[0,1]
\end{equation}
where $\Phi^n_0:=\phi_0$.
\end{lem}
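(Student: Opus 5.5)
We construct $\mathbb Q_n$ directly by prescribing the conditional probability of a move at each step, driven by a discrete analogue of $W$. Since $\mathbb Q_n\in\mathcal Q_n$ only requires equal conditional up/down probabilities, we set
\[
\mathbb Q_n(\xi_k=1\mid\mathcal F_{k-1})=\mathbb Q_n(\xi_k=-1\mid\mathcal F_{k-1})=\frac{q^n_k}{2},
\qquad q^n_k:=\frac{(\alpha^n_{k})^2}{\bar\sigma^2}\in(0,1],
\]
where $\alpha^n_k$ is an $\mathcal F_{k-1}$-measurable discretization of $\alpha$. This gives $\Phi^{n,\mathbb Q_n}_k=\bar\sigma^2 q^n_k=(\alpha^n_k)^2$.

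To define $\alpha^n_k$, introduce the discrete Brownian proxy
\[
W^n_k:=\sum_{i=1}^k\frac{\bar\sigma\,\xi_i}{\sqrt n\,\alpha^n_i},
\]
which is well defined recursively because $\alpha^n_i\ge\epsilon>0$. Then set $\alpha^n_k:=\phi_0$ for $k/n<t_1$, and $\alpha^n_k:=\phi_{j}\big(W^n_{[nt_0]},\dots,W^n_{[nt_j]}\big)$ for $(k-1)/n\in[t_{j+1},t_{j+2})$. This matches the form \eqref{7}, with the grid points shifted so that $\alpha^n_k$ is $\mathcal F_{k-1}$-measurable.

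The increments of $W^n$ are conditionally centered under $\mathbb Q_n$, with conditional variance
\[
\frac{\bar\sigma^2 q^n_k}{n(\alpha^n_k)^2}=\frac1n,
\]
and they are bounded by $\bar\sigma/(\epsilon\sqrt n)$. The martingale functional CLT therefore gives $L^n(W^n)\Rightarrow W$, a standard Brownian motion. We get quadratic-variation convergence by the same argument as \eqref{10+}: the process $\sum_{i\le k}(W^n_i-W^n_{i-1})^2-k/n$ is a martingale whose increments are of order $1/n$.

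Next we pass to the limit in the joint process. Because each $\phi_j$ is continuous and evaluated at finitely many fixed times where $W$ is a.s. continuous, the continuous mapping theorem gives
\[
\big(L^n(W^n),L^n(\alpha^n)\big)\Rightarrow\big(W,\alpha\big)\quad\text{on } D^2[0,1].
\]
Here the step functions converge in the Skorokhod topology since their jump times $t_j$ are deterministic and the jump values converge jointly. This already yields the second component of \eqref{9}, since $\Phi^{n}=(\alpha^n)^2$.

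For the stock, write
\[
N^n_k:=\frac{\bar\sigma}{\sqrt n}\sum_{i\le k}\xi_i=\sum_{i\le k}\alpha^n_i\,(W^n_i-W^n_{i-1}),
\]
a discrete stochastic integral of the predictable integrand $\alpha^n$ against $W^n$. By the stability of stochastic integrals under weak convergence (\cite[Theorem 4.3]{DP}, as in Lemma~\ref{lem3}), which applies because $W^n$ has bounded jumps and uniformly controlled quadratic variation (the UT condition), we obtain $L^n(N^n)\Rightarrow\int_0^\cdot\alpha_u\,dW_u$ jointly with the above.

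The last step recovers $S^n$ from $N^n$. We have
\[
\log S^n_k=\log S_0+\sum_{i\le k}\log\Big(1+\frac{\bar\sigma\xi_i}{\sqrt n}\Big)=\log S_0+N^n_k-\frac12\Lambda^n_k+O(n^{-1/2})
\]
uniformly in $k$, where $\Lambda^n_k$ is as in Lemma~\ref{lem3}. Also $\Lambda^n_k-\frac1n\sum_{i\le k}(\alpha^n_i)^2\to0$ uniformly in probability, by \eqref{10+}. Combining these, $L^n(S^n)\Rightarrow S_0\exp\big(\int\alpha\,dW-\frac12\int\alpha^2dt\big)=S^\alpha$ jointly with $L^n(\Phi^{n,\mathbb Q_n})\Rightarrow\alpha^2$. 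Because the joint limit is a measurable functional of $W$, the limiting law is exactly that of $(S^\alpha,\alpha^2)$ under $\mathbb P_W$.

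I expect the main obstacle to be the joint convergence of the discrete stochastic integral $N^n$ together with the step process $L^n(\alpha^n)$. The integrand jumps exactly at the deterministic times $t_j$, so the Skorokhod topology and the index shift between $(k-1)/n$ and $t_j$ need care. If that becomes awkward, I would fall back on integrating piecewise: on each interval $[t_j,t_{j+1})$ the integrand is frozen at a value that converges by the continuous mapping theorem, so $N^n$ restricted to that interval is a continuous function of the increments of $W^n$ there and of finitely many values of $W^n$. That avoids the general stability theorem entirely.
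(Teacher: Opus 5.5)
Your proof is correct and is essentially the paper's. It uses the same construction of $\mathbb Q_n$ via $\Phi^{n,\mathbb Q_n}_k=(\alpha^n_k)^2$ with the recursively defined proxy $W^n$, the same martingale CLT for $W^n$, and the same continuous-mapping step for joint convergence with $L^n(\Phi^{n,\mathbb Q_n})$. The only variation is at the end: the paper applies the Duffie--Protter stability results directly to the relation $S^n_k-S^n_{k-1}=\sqrt{\Phi^n_k}\,S^n_{k-1}(W^n_k-W^n_{k-1})$. You instead pass through the linear integral $N^n_k=\sum_{i\le k}\alpha^n_i\,(W^n_i-W^n_{i-1})$ and recover $S^n$ from the expansion of $\log(1+\bar\sigma\xi_i/\sqrt n)$ together with \eqref{10+}. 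That works equally well, and your piecewise fallback makes it fully elementary.
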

\begin{proof}
Let $\alpha$ be as in \eqref{7}. 
Fix $n\in\mathbb N$.
For $k=1,\dots,n$, let $j:=j(k)$ be such that $k/n\in[t_j,t_{j+1})$.
Define 
$W^n_0:=0$, $\Phi^n_0:=\phi_0$ and by recursion for $k=1,...,n$
\begin{align*}
&\Phi^n_k:=
\phi^2_{j(k)-1}\left(
W^n_{[nt_0]},\dots,W^n_{[nt_{j(k)-1}]}
\right),\\
&W^n_k:=W^n_{k-1}
+\frac{\bar\sigma \xi_k}{\sqrt{n\Phi^n_k }}.
\end{align*}
Let us notice that there exists a unique probability measure $\mathbb Q_n\in\mathcal Q_n$ such that 
$\Phi^{n,\mathbb Q_n}_k=\Phi^n_k$ for all $k=1,...n$. 

Next, observe that \(\{W^n_k\}_{k=0}^{n}\) is a $\mathbb Q_n$- martingale and for any \(k=1,\dots,n\)
\[
|W^n_k-W^n_{k-1}|\leq \frac{1}{\varepsilon\sqrt{n}},
\qquad
\mathbb E_{\mathbb Q_n}\big((W^n_k-W^n_{k-1})^2\mid\mathcal F_{k-1}\big)=\frac{1}{n}.
\]
Thus, by the Martingale Central Limit Theorem (Theorem 7.4.1 in \cite{EK}) we obtain
\[
\big(\{W^n_{[nt]}\}_{t=0}^1;\mathbb Q_n\big)\ \Rightarrow\ \big(\{W_t\}_{t=0}^1;\mathbb P_W\big)
\quad\text{on }D[0,1].
\]
From the continuity of the functions \(\phi_1,\dots,\phi_J\) we obtain the joint convergence
\begin{equation}\label{10}
\big(\{W^n_{[nt]}\}_{t=0}^1,\{\Phi^{n,\mathbb Q_n}_{[nt]}\}_{t=0}^1;\mathbb Q_n\big)
\ \Rightarrow\ 
\big(\{W_t\}_{t=0}^1,\{\alpha^2_t\}_{t=0}^1;\mathbb P_W\big)
\ \mbox{on} \ D^2[0,1].
\end{equation}
By combining Theorems 4.3-4.4 in \cite{DP}, \eqref{10} and the relations 
\begin{align*}
&S^n_0=S_0, \ S^n_k-S^n_{k-1}=\sqrt{\Phi^{n,\mathbb Q_n}_k}S^n_{k-1} (W^n_k-W^n_{k-1}), \ \ k=1,...,n\\
&S^{\alpha}_0=S_0, \ dS^{\alpha}_t=\alpha_t S^{\alpha}_t dW_t, \ \ t\in [0,1],
\end{align*}
we obtain \eqref{9}.
\end{proof}

Next, we apply the dual representation for the certainty equivalent in discrete time (see Chapter 11 in \cite{FL})
which says 
\begin{equation}\label{4.0}
\mathcal C_n:=\sup_{\mathbb Q\in\mathcal Q_n}\mathbb E_{\mathbb Q}\left[F_n-\frac{1}{n l}\log\left(\frac{d\mathbb Q}{d\mathbb P}\right)\right].
\end{equation} 
Fix $n$ and $\mathbb Q_n\in\mathcal Q_n$. Recall the process $\Phi^{n,\mathbb Q_n}$ which is given by before Lemma \ref{lem3}.
Observe that for any $k=1,...,n$
\begin{align*}
&\mathbb Q_n(\xi_k=1|\mathcal F_{k-1})=\mathbb Q_n(\xi_k=-1|\mathcal F_{k-1})=\frac{\Phi^{n,\mathbb Q_n}_k}{2\bar\sigma^2},\\
&\mathbb Q_n(\xi_k=0|\mathcal F_{k-1})=1-\frac{\Phi^{n,\mathbb Q_n}_k}{\bar\sigma^2}.
\end{align*}
Hence,
\begin{align*}
&\mathbb E_{\mathbb Q_n}\left[\log\left(\frac{d\mathbb Q_n}{d\mathbb P}\right)\right]\\
&=\sum_{k=1}^n \mathbb E_{\mathbb Q_n}\left[\log\left(\frac{\mathbb Q_n(\xi_k|\mathcal F_{k-1})}{\mathbb P(\xi_k|\mathcal F_{k-1})}\right)\right]\\
&=\sum_{k=1}^n \mathbb E_{\mathbb Q_n}\left[\mathbb E_{\mathbb Q_n}\left[\log\left(\frac{\mathbb Q_n(\xi_k|\mathcal F_{k-1})}{\mathbb P(\xi_k|\mathcal F_{k-1})}\right)\left|\mathcal F_{k-1}\right.\right]\right]\\
&=\sum_{k=1}^n \mathbb E_{\mathbb Q_n}\left[\frac{\Phi^{n,\mathbb Q_n}_k}{\bar\sigma^2}\log\left(\frac{ \Phi^{n,\mathbb Q_n}_k}{p\bar\sigma^2}\right)+
\left(1-\frac{\Phi^{n,\mathbb Q_n}_k}{\bar\sigma^2}\right)\log\left(\frac{1-\frac{\Phi^{n,\mathbb Q_n}_k}{\bar\sigma^2}}{1-p}\right)\right].
\end{align*}
Thus,
\begin{equation}\label{4.2+}
\mathbb E_{\mathbb Q_n}\left[\log\left(\frac{d\mathbb Q_n}{d\mathbb P}\right)\right]=\sum_{k=1}^n \mathbb E_{\mathbb Q_n}\left[G_{p}\left(\frac{\Phi^{n,\mathbb Q_n}_k}{\bar\sigma^2}\right)\right].
\end{equation}

Now we have all the ingredients to prove Theorem \ref{thm.1}.
\begin{proof}
\emph{Lower bound:}
First, we show that
$$
\lim\inf_{n\to\infty} \mathcal C_n\geq
\sup_{\alpha\in\mathcal{A}_{\bar\sigma}}
\mathbb{E}_{\mathbb{P}_W}\left[
F\left(S^{\alpha}\right)
-\frac{1}{l}\int_{0}^1 G_{p}\left(\frac{\alpha_t^2}{\bar\sigma^2}\right)\,dt
\right].
$$
By applying standard density arguments as in Lemma 7.3 in \cite{DS} we see that 
\begin{align*}
&\sup_{\alpha\in\mathcal{A}_{\bar\sigma}}
\mathbb{E}_{\mathbb{P}_W}\left[
F\left(S^{\alpha}\right)
-\frac{1}{l}\int_{0}^1 G_{p}\left(\frac{\alpha_t^2}{\bar\sigma^2}\right)\,dt
\right]\\
&=\sup_{\alpha\in\mathcal{A}^0_{\bar\sigma}}
\mathbb{E}_{\mathbb{P}_W}\left[
F\left(S^{\alpha}\right)
-\frac{1}{l}\int_{0}^1 G_{p}\left(\frac{\alpha_t^2}{\bar\sigma^2}\right)\,dt
\right].
\end{align*}
Thus, in order to establish the lower bound we need to prove that for any $\alpha\in\mathcal A^0_{\bar\sigma}$
\begin{equation}\label{4.1}
\lim\inf_{n\to\infty} \mathcal C_n\geq
\mathbb{E}_{\mathbb{P}_W}\left[
F\left(S^{\alpha}\right)
-\frac{1}{l}\int_{0}^1 G_{p}\left(\frac{\alpha_t^2}{\bar\sigma^2}\right)\,dt
\right].
\end{equation}
Indeed, let $\alpha$ be as in \eqref{7}. From Lemma \ref{lem1} there exists a sequence 
$\mathbb Q_n\in\mathcal Q_n$, $n\in\mathbb N$ such that \eqref{9} holds true. 
From Lemma \ref{lem2}, Lemma \ref{lem1}, the continuity of $F$ and the growth condition \eqref{3}, we obtain
\begin{equation}\label{4.2}
\lim_{n\rightarrow\infty}\mathbb E_{\mathbb Q_n}[F_n]=\mathbb{E}_{\mathbb{P}_W}\left[F\left(S^{\alpha}\right)\right].
\end{equation}
Next, since the function $G_{p}$ is bounded then Lemma \ref{lem1} and \eqref{4.2+} yield
$$\lim_{n\rightarrow\infty}\mathbb E_{\mathbb Q_n}\left[\frac{1}{n}\log\left(\frac{d\mathbb Q_n}{d\mathbb P}\right)\right]
=\mathbb{E}_{\mathbb{P}_W}\left[\int_{0}^1 G_{p}\left(\frac{\alpha^2_t}{\bar\sigma^2}\right)dt.\right]$$
This together with \eqref{4.0} and \eqref{4.2} gives \eqref{4.1}, and completes the proof of the lower bound. \\

\smallskip

\noindent\emph{Upper bound:}
By passing to a subsequence we assume without loss of generality 
that 
$\lim_{n\rightarrow\infty}\mathcal C_n$ exists. 
In view of (\ref{4.0}) there exists 
$\mathbb Q_n\in\mathcal Q_n$, $n\in\mathbb N$ (with abuse of notations) such that 
\begin{equation}\label{4.3}
\mathcal C_n<\frac{1}{n}+\mathbb E_{\mathbb Q_n}\left[F_n-\frac{1}{n l}\log\left(\frac{d\mathbb Q_n}{d\mathbb P}\right)\right], \ \ \forall n\in\mathbb N.
\end{equation}
From the Skorohod representation theorem (Theorem 3 of \cite{D}), Lemma \ref{lem3}, and passing to a subsequence
it follows that there exists a probability space
$(\tilde\Omega,\tilde{\mathcal{F}},\tilde{\mathbb P})$
on which 
\begin{equation}\label{4.5}
(L^n(S^n);L^n(\Psi^{n,\mathbb Q_n});\mathbb Q_n)\rightarrow (M,\langle  \log M\rangle;\mathbb P^M) \ \ \mbox{a.s.} \ \mbox{on} \ D^2[0,1],
\end{equation}
where $M$ is a continuous and strictly positive martingale.

Again, Lemma \ref{lem2}, the continuity of $F$ and the growth condition \eqref{3} implies 
\begin{equation}\label{4.4}
\lim_{n\rightarrow\infty}\mathbb E_{\mathbb Q_n}[F_n]=\mathbb{E}_{\tilde{\mathbb P}}\left[F\left(M\right)\right].
\end{equation}

Next, (observe that $\Phi^{n,\mathbb Q_n}\in [0,\bar\sigma^2]$) 
by Lemma A1.1 in \cite{DS1}, we construct a sequence
$$
\{\eta^n_t\}_{t=0}^1 \in \mathrm{conv}\left(\{\Phi^{n,\mathbb Q_n}_{[nt]}\}_{t=0}^1,\{\Phi^{n+1,\mathbb Q_{n+1}}_{[(n+1)t]}\}_{t=0}^1,\dots\right), \ n\in\mathbb N$$
such that $\eta^n_t$ converges almost surely in
$dt\otimes \tilde{\mathbb P}$ to a stochastic process $\eta$.
Clearly, $\eta\in  [0,\bar\sigma^2]$.

From the Lebesgue dominated convergence theorem and \eqref{4.5}
we obtain
$$
\int_0^t \eta_u\,du
=
\lim_{n\to\infty}
\int_0^t \eta^n_u\,du
=
\lim_{n\to\infty}
\int_0^t \Phi^{n,\mathbb Q_n}_{[nu]}\,du
=\langle  \log M\rangle_t, \ dt\otimes\tilde{\mathbb P}-\mbox{a.s.}
$$
Hence,
$$\frac{d}{dt}\langle  \log M\rangle_t=\eta_t, \  dt\otimes\tilde{\mathbb P}-\mbox{a.s.} $$

Observe that $G_{p}:[0,1]\rightarrow \mathbb R_{+}$ is convex 
and so 
\begin{align*}
&\mathbb E_{\tilde{\mathbb P}}\left[\int_{0}^1 G_{p}\left(\frac{\frac{d}{dt}\langle  \log M\rangle_t}{\bar\sigma^2}\right)dt\right]\\
&=\lim_{n\rightarrow\infty}\mathbb E_{\tilde{\mathbb P}}\left[\int_{0}^1 G_{p}\left(\frac{\eta^n_t}{\bar\sigma^2}\right)dt\right]\\
&\leq \lim\sup_{n\rightarrow\infty} \mathbb E_{\mathbb Q_n}\left[\int_{0}^1 G_{p}\left(\frac{\Phi^{n,\mathbb Q_n}_{[nt]}}{\bar\sigma^2}\right)dt\right].
\end{align*}
This together with \eqref{4.3} and \eqref{4.4} gives 
$$\lim_{n\rightarrow\infty}\mathcal C_n\leq\mathbb E_{\tilde{\mathbb P}}\left[F(M)-\frac{1}{l}\int_{0}^1G_{p}\left(\frac{\frac{d}{dt}\langle  \log M\rangle_t}{\bar\sigma^2}\right)dt\right].$$
Finally, since $\frac{d}{dt}\langle  \log M\rangle_t\in [0,\bar\sigma^2]$, then by applying the randomization technique as in Lemma 7.2 in \cite{DS} 
we obtain 
\begin{align*}
&\mathbb{E}_{\tilde{\mathbb{P}}}\left[
F(M)
-\frac{1}{l}\int_{0}^{1}
G_{p}\left(
\frac{\frac{d}{dt}\langle \log M\rangle_t}{\bar{\sigma}^2}
\right)\,dt
\right]\\
&\leq \sup_{\alpha\in\mathcal{A}_{\bar{\sigma}}}
\mathbb{E}_{\mathbb{P}_W}\left[
F\left(S^{\alpha}\right)
-\frac{1}{l}\int_{0}^{1}
G_{p}\left(\frac{\alpha_t^2}{\bar{\sigma}^2}\right)\,dt
\right].
\end{align*}
and complete the proof. 
\end{proof}

\section{Markovian Case}\label{sec:5}
In this section, we assume Markovian payoffs, i.e. 
$F_n=F(S^n_n)$, $n\in\mathbb N$,  for some $F \in C(\mathbb R_+; \mathbb R)$.
As in \eqref{200}, iteratively, we can define for $i=n,n-1,\dotso, 0$,
\begin{align*}
U^n_n(x)=&F(x), \\
    nl U^n_{i}(x)=& \inf_{a \in \mathbb R} \log \mathbb E_{\mathbb P}\left[\exp\left(nl U^n_{i+1}(S_{i+1}^n)-nla(S_{i+1}^n-S_i^n)  \right)\,| \, S_i^n=x\right] .
\end{align*}
Assuming that $U^n$ converges to a function $v(t,x)$, then from the equality above we deduce that  for large enough $n$
\begin{align*}
    1= \inf_{a \in \mathbb R}\mathbb E &\left[ \exp\left(lv_t(t,x)+nl  v_x(t,x) (S_{i+1}^n-x) \right. \right.\\
    &\left.\left. \ \ +nl v_{xx}(t,x) (S_{i+1}^n-x)^2/2- nl a (S_{i+1}^n-x) \right) \big| \, S_i^n=x\right].
\end{align*}
Heuristically we set $a= v_x(t,x)$, and get the PDE
\begin{equation}\label{eq:backward-v}
\begin{cases}
v_t(t,x)+K\bigl(x^2v_{xx}(t,x)\bigr)=0,
& (t,x)\in[0,1)\times(0,\infty),\\[0.3em]
v(1,x)=F(x), & x>0,
\end{cases}
\end{equation}
where
\[
\Lambda :=\bar\sigma^2>0,
\qquad
K(w):=\frac{1}{l}\log\left((1-p)+p \exp\left(\frac{l}{2}\Lambda w \right)\right),
\qquad w\in\mathbb R.
\]
Equation \eqref{eq:backward-v} is exactly the HJB  corresponding to the right hand side of \eqref{2.3}. 

In the calculation above, the choice of $a=v_x(t,x)$ gives a trading strategy in the primal problem. In Theorem~\ref{thm:2}, we will prove that such a strategy asymptotically optimal.

As \(\Lambda>0\), the function \(K\) is smooth, strictly increasing, and strictly convex on \(\mathbb R\), with
\begin{align}\label{eq:Kbound}
K'(w) \in (0,\Lambda/2), \quad K''(w) \in (0, l \Lambda^2/16), \quad w \in \mathbb R.
\end{align}
Therefore, \eqref{eq:backward-v} is locally uniformly elliptic and convex.

We transform \eqref{eq:backward-v} to forward parabolic equations. Setting $u(t,x):=v(1-t,x)$, 
\begin{equation}\label{eq:forward-u}
\begin{cases}
u_t(t,x)=K\bigl(x^2  u_{xx}(t,x)\bigr),
& (t,x)\in(0,1]\times (0,\infty),\\[0.3em]
u(0,x)=F(x), & x\in (0,\infty). 
\end{cases}
\end{equation}
Take $f(y):=F(e^y)$, $y \in \mathbb R$. By a log-space transformation, 
\begin{align*}
    \bar u(t,y):= u(t, e^y), \quad (t,y) \in [0,1] \times \mathbb R, 
\end{align*}
satisfies the forward Cauchy problem
\begin{equation}\label{eq:forward-baru}
\begin{cases}
\bar u_t(t,y)=K\bigl(\bar u_{yy}(t,y)-\bar u_y(t,y)\bigr),
& (t,y)\in(0,1]\times\mathbb R,\\[0.3em]
\bar u(0,y)=f(y), & y\in\mathbb R,
\end{cases}
\end{equation}
where we use equalities
\[
\bar u_y(t,y)=x\,u_x(t,e^y),
\qquad
\bar u_{yy}(t,y)-\bar u_y(t,y)=x^2 u_{xx}(t,e^y).
\]
Clearly, \eqref{eq:forward-u} and \eqref{eq:forward-baru} are equivalent. Throughout the remainder of this paper, we shall pass from one formulation to the other whenever convenient for the analysis. Under some assumptions on $f(\cdot)=F(e^{\cdot})$, we will prove that there exists a unique classical solution to \eqref{eq:forward-u}. 

\begin{assume}\label{assume:f} 
  Take $f(y):= F(e^y)$, $y \in \mathbb R$. 
    \begin{itemize}
        \item[(i)] $f$ is continuous with at most linear growth. \vspace{0.3em}
        \item[(ii)] $f$ is second-order differentiable and locally $\alpha$-H\"{o}lder continuous for some $\alpha \in (0,1)$, i.e. $f \in C^{2,\alpha}_{\mathrm{loc}}(\mathbb R)$. \vspace{0.3em}
        \item[(iii)] With $w_0:=f''-f'$, it holds that  
        \begin{align*}
-\infty <m_0:=&\inf_{y\in\mathbb R}w_0(y) \leq \sup_{y\in\mathbb R}w_0(y)=:M_0<\infty.
\end{align*}
\item[(iv)] $w_0 \in C^{2+\alpha}(\mathbb R)$ for some $\alpha \in (0,1)$. 
    \end{itemize}
\end{assume}

\begin{lem}\label{lem:comparison}
    Under Assumption~\ref{assume:f}~(i), a comparison principle holds for \eqref{eq:forward-baru} among the functions with at most linear growth. In particular, there exists a unique viscosity solution with at most linear growth to \eqref{eq:forward-baru}.
\end{lem}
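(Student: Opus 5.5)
We prove the comparison principle for \eqref{eq:forward-baru} with the standard doubling-of-variables technique for viscosity solutions of the uniformly parabolic equation $\bar u_t = K(\bar u_{yy}-\bar u_y)$. Existence then follows from Perron's method. The relevant structural facts come from \eqref{eq:Kbound}: $K$ is smooth, nondecreasing (indeed $0<K'<\Lambda/2$) and globally Lipschitz. Hence the operator $(p,X)\mapsto K(X-p)$ is degenerate elliptic, Lipschitz in $(p,X)$, and has no dependence on $(t,y)$ or $\bar u$. This keeps the estimates in the doubling argument simple. The only genuine difficulty is the unbounded domain $\mathbb R$ combined with linear growth of the solutions.

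\emph{Step 1 (reduction to a strict supersolution with growth penalization).} Let $u$ be a viscosity subsolution and $w$ a viscosity supersolution, both of at most linear growth, with $u(0,\cdot)\le w(0,\cdot)$. We want $u\le w$. Set $\psi(y):=\sqrt{1+y^2}$, so $|\psi'|\le1$, $|\psi''|\le 1$ and $\psi$ grows linearly. For $\delta,\beta>0$ and a constant $\kappa$ to be chosen, define
\[
w^{\delta}(t,y):=w(t,y)+\delta e^{\kappa t}\psi(y)^2+\beta t .
\]
Using the Lipschitz bound $|K(X)-K(X')|\le \tfrac{\Lambda}{2}|X-X'|$ together with $|(\psi^2)''-(\psi^2)'|\le C\psi^2$, and choosing $\kappa$ large, we check that $w^\delta$ is a strict supersolution: $w^\delta_t - K(w^\delta_{yy}-w^\delta_y)\ge \beta>0$ in the viscosity sense. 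The quadratic penalty $\delta\psi^2$ dominates linear growth, so $u-w^\delta\to-\infty$ as $|y|\to\infty$, uniformly in $t$. It therefore suffices to show $u\le w^\delta$ and then let $\delta,\beta\to0$.

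\emph{Step 2 (doubling variables).} Suppose $\sup(u-w^\delta)>0$. Consider
\[
\Theta_\varepsilon(t,x,y):=u(t,x)-w^\delta(t,y)-\frac{|x-y|^2}{2\varepsilon}.
\]
By Step 1 the maximum of $\Theta_\varepsilon$ is attained on a compact set independent of $\varepsilon$. Standard arguments then give $|x_\varepsilon-y_\varepsilon|^2/\varepsilon\to0$. Because $u(0,\cdot)\le w(0,\cdot)$, for small $\varepsilon$ the maximum occurs at some $t_\varepsilon>0$. We apply the parabolic Crandall--Ishii lemma (user's guide of Crandall--Ishii--Lions, Theorem 8.3). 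It yields $(a,p,X)\in\bar{\mathcal P}^{2,+}u$ and $(b,p,Y)\in\bar{\mathcal P}^{2,-}w^\delta$ with $a=b$, $p=(x_\varepsilon-y_\varepsilon)/\varepsilon$, and $X\le Y$. The equation has no $(t,y)$-dependence and $K$ is monotone, so $K(X-p)\le K(Y-p)$. Combining this with the sub- and supersolution inequalities gives
\[
a\le K(X-p)\le K(Y-p)\le b-\beta ,
\]
which contradicts $a=b$.

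\emph{Step 3 (existence and uniqueness).} Uniqueness follows immediately from comparison. For existence, Assumption~\ref{assume:f}(i) supplies continuous barriers of linear growth. Specifically,
\[
f(y)\pm\bigl(C\delta e^{\kappa t}\psi(y)^2+C_\delta t\bigr)
\]
can be used after approximating $f$ by functions with bounded $f''-f'$. Alternatively, we can use $\bar u^\pm=\mathbb E[f(y+\text{Brownian terms})]$-type bounds derived from the control representation, since $K$ is bounded between the linear functions given by its asymptotic slopes $0$ and $\Lambda/2$. Perron's method, together with comparison, then yields a continuous viscosity solution that attains the initial datum.

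\emph{Main obstacle.} The step I expect to be hardest is the barrier at $t=0$ when $f$ is merely continuous with linear growth. Step 1 requires the sub- and supersolutions to match $f$ locally uniformly as $t\downarrow0$. To handle this I would first treat Lipschitz $f$, and then pass to general $f$ through a local-uniform approximation argument that uses comparison.
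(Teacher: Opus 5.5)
Your comparison argument is the paper's: the paper only asserts that, because $K$ is increasing and Lipschitz, comparison among functions of at most linear growth follows from a standard doubling-of-variables argument. Your Steps 1--2 are that argument written out: the strict supersolution $w+\delta e^{\kappa t}\psi^2+\beta t$, with $\kappa$ fixed by the Lipschitz constant $\Lambda/2$, followed by the parabolic Crandall--Ishii lemma and monotonicity of $K$.

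The genuine difference is existence. The paper does not use Perron's method. It takes the value function of the control problem on the right-hand side of \eqref{2.3}, started at a general $(t,x)$. After time reversal and the log change of variables, its HJB equation is \eqref{eq:forward-baru}, because $\frac1l\sup_{q\in[0,1]}\{q\,\frac{l\Lambda}{2}w-G_p(q)\}=K(w)$. That a value function is a viscosity solution is then quoted as standard.

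That route presupposes the dynamic programming principle and continuity of the value function. In return, it identifies the unique viscosity solution with the control value, hence with $\lim\mathcal C_n$. The proof of Theorem~\ref{thm:2} relies on exactly this identification when it reduces to $\limsup_n\tilde{\mathcal C}_n\le v(0,\cdot)$.

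Your Perron route stays entirely within PDE theory and needs no control argument. However, it yields only some solution. To link it to \eqref{2.3} you would still have to show that the value function is a viscosity solution; your uniqueness result then closes the loop.

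The barrier step you flag as the main obstacle is routine. Since $f$ has linear growth, you can choose a smooth compactly supported $g$ with $|f-g|\le\eta+\delta\psi^2$. Then $g\pm(\eta+\delta e^{\kappa t}\psi^2+C_{\delta,\eta}t)$ are super- and subsolutions. These grow quadratically, so take their infimum/supremum together with the linear-growth barriers $\pm\bigl(C(1+\psi)+Mt\bigr)$, which are valid because $\psi''-\psi'$ is bounded. This keeps the Perron envelopes in the class where your comparison applies and makes them attain $f$ at $t=0$.
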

\begin{proof}
For the existence,  a viscosity solution  to \eqref{eq:forward-baru} is provided by the stochastic control problem corresponding to the right hand side of \eqref{2.3}.

    As $r \mapsto K(r)$ is strictly increasing, equation \eqref{eq:forward-baru} is parabolic. Noting that $K$ is Lipschitz, comparison principle holds for functions with at most linear growth by a standard doubling variable argument. 
\end{proof}

\begin{lem}\label{lem:smooth}
Under Assumption~\ref{assume:f} (i) (ii), there exists a unique classical solution to \eqref{eq:forward-u}. Moreover, we have the regularity $$ u \in C^{1+\alpha/2,\,2+\alpha}_{\mathrm{loc}}([0,1] \times (0,\infty))  \cap  C^{\infty}((0,1] \times (0,\infty)).$$ 
\end{lem}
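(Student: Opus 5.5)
We work throughout with the log-space formulation \eqref{eq:forward-baru}. This is legitimate because \eqref{eq:forward-u} and \eqref{eq:forward-baru} are equivalent through $x=e^y$, and the change of variables preserves local H\"older regularity on $(0,\infty)$. Uniqueness of the classical solution is already settled: a classical solution with at most linear growth is a viscosity solution, so Lemma~\ref{lem:comparison} applies. (We will check that our solution has linear growth by barriers.) It therefore remains to construct a solution with the stated regularity.

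\emph{Approximation and truncation.} The operator $w\mapsto K(w)$ is convex and increasing, but its ellipticity degenerates: $K'(w)\to 0$ as $w\to-\infty$, and $K'(w)\to\Lambda/2$ as $w\to+\infty$. We remove the degeneracy by truncation. For $N\in\mathbb N$, let $K_N$ coincide with $K$ on $[-N,N]$ and be extended affinely with slopes $K'(\pm N)$. Then $K_N$ is convex with $K_N'\in[K'(-N),\Lambda/2)$, so $K_N$ is uniformly elliptic.

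For $R>0$ we solve
\[
\bar u_t=K_N(\bar u_{yy}-\bar u_y) \quad\text{on } (0,1]\times(-R,R),
\]
with initial data $f$ and lateral data
\[
g(t,\pm R)=f(\pm R)+tK_N\bigl(w_0(\pm R)\bigr),
\]
where $w_0=f''-f'$. This choice of $g$ enforces first-order compatibility at the corners. Krylov's theory for convex, uniformly parabolic fully nonlinear equations (Evans--Krylov together with boundary Schauder-type estimates) yields a solution $\bar u^{N,R}\in C^{1+\alpha/2,2+\alpha}$ on compact subsets of $[0,1]\times(-R,R)$. It uses $f\in C^{2,\alpha}_{\mathrm{loc}}$ near $t=0$.

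\emph{Estimates uniform in $N$ and $R$.} Two kinds of bounds are needed.

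First, zeroth-order bounds. We build barriers of the form $\pm\bigl(A(1+|y|)\bigr)+Bt$, suitably smoothed at $y=0$. Since $K$ is Lipschitz and $f$ has linear growth, comparison gives $|\bar u^{N,R}(t,y)|\le C(1+|y|)$ with $C$ independent of $N$ and $R$.

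Second, and more importantly, a local bound on $w:=\bar u_{yy}-\bar u_y$. Applying $\partial_{yy}-\partial_y$ to the equation gives
\[
w_t=K_N'(w)(w_{yy}-w_y)+K_N''(w)w_y^2 .
\]
This is a quasilinear equation to which the maximum principle applies. On a compact set $[-R_0,R_0]$, with $R\gg R_0$, we aim to bound $|w|$ by data of $w_0$ on a slightly larger interval. To do this we combine an interior Bernstein/cut-off argument with the local $C^{2}$ bound on $f$. Once $|w|\le N_0$ locally, the truncation is inactive for $N>N_0$. The equation is then locally uniformly elliptic with constant $K'(-N_0)$, and the Evans--Krylov and Schauder estimates give $C^{1+\alpha/2,2+\alpha}$ bounds on compacts of $[0,1]\times\mathbb R$ that are independent of $N$ and $R$.

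I expect this local $w$-bound to be the main obstacle. Because $K'$ degenerates as $w\to-\infty$, without Assumption~\ref{assume:f}(iii) we have no global control. The cut-off argument must therefore exploit the structure of the $w$-equation, in particular the favorable sign of $K''w_y^2$ at a maximum of $w$ and a separate barrier for the minimum. If this fails, a fallback is to obtain interior H\"older bounds first via Krylov--Safonov applied to difference quotients, and then bootstrap.

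\emph{Passage to the limit and regularity.} By Arzel\`a--Ascoli and a diagonal argument, we let $N,R\to\infty$ along a subsequence. The limit $\bar u$ is a classical solution of \eqref{eq:forward-baru} with linear growth and lies in $C^{1+\alpha/2,2+\alpha}_{\mathrm{loc}}([0,1]\times\mathbb R)$. By Lemma~\ref{lem:comparison} it equals the unique viscosity solution, so the whole family converges.

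For $C^\infty$ regularity when $t>0$, we differentiate the equation. Each derivative $\partial_y^k\bar u$ solves a linear uniformly parabolic equation with leading coefficient $K'(w)$, which is H\"older continuous. Interior Schauder estimates then raise the regularity one step at a time. Translating back via $u(t,x)=\bar u(t,\log x)$ gives the claim for \eqref{eq:forward-u}.
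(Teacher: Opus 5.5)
Your proposal has a genuine gap, and it sits at exactly the step you flag. The construction needs a lower bound on $w=\bar u_{yy}-\bar u_y$ on each fixed compact $[-R_0,R_0]$, uniform in $N$ and $R$. Only the lower bound matters for ellipticity, since $K'<\Lambda/2$ everywhere and $K'(w)\to0$ only as $w\to-\infty$. Removing the truncation, the uniform Evans--Krylov/Schauder bounds, and the $C^{1+\alpha/2,2+\alpha}_{\mathrm{loc}}$ regularity of the limit all depend on this bound, and it is never proved.

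What your setup does give is information on the whole interval $[-R,R]$. Comparing $\bar u^{N,R}(\cdot+h,\cdot)$ with $\bar u^{N,R}$, as in Lemma~\ref{lem:PDEbound}, yields $w\in[\min_{[-R,R]}w_0,\max_{[-R,R]}w_0]$ uniformly in $N$. But under Assumption~\ref{assume:f}(i)--(ii), $w_0$ is only locally bounded, so this range and the ellipticity constant $K'(\min_{[-R,R]}w_0)$ can degenerate as $R\to\infty$. The same defect affects your zeroth-order barriers: your lateral data contain $K_N(w_0(\pm R))$, so the constant $B$ is not controlled independently of $R$. Under (iii) your scheme would essentially close. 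Without it you need a genuine interior estimate, and you do not have one.

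The mechanism you sketch for that estimate also points the wrong way. At an interior maximum of $w$ one has $w_y=0$, so $K''(w)w_y^2$ vanishes and contributes nothing. After localizing with a cut-off $\zeta$, at a maximum of $\zeta w$ one has $w_y=-\zeta_y w/\zeta\neq0$, and $K''(w)w_y^2\ge0$ is a source term that works \emph{against} an upper bound. The sign does help a lower bound. But then you must show that the cut-off errors, of order $K'(w)|w|(\zeta_y^2/\zeta+|\zeta_{yy}|+|\zeta_y|)$, are absorbed, including where $\zeta_y$ is small and the good term $K''(w)\zeta_y^2w^2/\zeta$ gives nothing. The proposal does not address this. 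The Krylov--Safonov fallback is circular, since its constants depend on the same ellipticity ratio, that is, on the missing lower bound for $w$.

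The paper avoids any approximation scheme. It starts from the unique viscosity solution (existence from the control problem, uniqueness from Lemma~\ref{lem:comparison}) and applies regularity theory for viscosity solutions directly:
\begin{itemize}
\item Goffi's $C^{1+\beta/2,2+\beta}$ estimate for the frozen-coefficient equation $\phi_t=K(x_0^2\phi_{xx})$;
\item Wang's perturbation theorem to handle the variable coefficient $x^2$;
\item Wang's Theorem~2.14 for regularity up to $t=0$, using $F\in C^{2,\alpha}_{\mathrm{loc}}$.
\end{itemize}
It then bootstraps through the linear equation satisfied by $\bar u_t$. Your bootstrap via $y$-derivatives is an equivalent final step, valid once $C^{2+\alpha}_{\mathrm{loc}}$ regularity is in hand.
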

\begin{proof}
\emph{Step 1: H\"{o}lder regularity.}
According to  \cite[Theorem 2.3]{Goffi2024}, for any fixed $x_0 >0$, the solution to 
\begin{align*}
     \phi_t(t,x)= K(x_0^2 \phi_{xx}(t,x))
\end{align*}
belongs to $C^{1+\beta/2,\,2+\beta}_{\mathrm{loc}}((0,1] \times (0,\infty))$ for a universal small $\beta>0$.
Applying \cite[Theorem 1.1]{Wang1992II}, the solution to \eqref{eq:forward-u} belongs to $C^{1+\alpha/2,\,2+\alpha}_{\mathrm{loc}}((0,1] \times (0,\infty))$ with some positive $\alpha < \beta$. 

Note that $f \in C^{2,\alpha}_{\mathrm{loc}}(\mathbb R)$ is equivalent to $F \in C^{2,\alpha}_{\mathrm{loc}}(0,\infty)$. Applying \cite[Theorem 2.14]{Wang1992II}, we get the regularity at $t=0$, and conclude that $ u \in C^{1+\alpha/2,\,2+\alpha}_{\mathrm{loc}}([0,1] \times (0,\infty))$.

\medskip
\noindent\emph{Step 2: Schauder bootstrapping.} Recall that \eqref{eq:forward-u} and \eqref{eq:forward-baru} are equivalent. Differentiate \eqref{eq:forward-baru}
w.r.t time, and setting $\phi= \bar u_t$, we get that 
\begin{align*}
     \phi_t(t,y)= K' (\bar u_{yy}(t,y)-\bar u_y(t,y) ) \left(\phi_{yy}(t,y)- \phi_y(t,y)\right).
\end{align*}
Restricted to compact domains $[0,1]\times [-R,R]$, $a(t,y):=K'(\bar u_{yy}(t,y)-\bar u_y(t,y))$ is $\alpha$-H\"{o}lder continuous, and $a(t,y) \in [h, H]$ with some $0< h <H$. Hence, the above equation is uniform elliptic. As $\phi=  \bar  u_t(t,y)$ is also H\"{o}lder continuous on the parabolic boundary of $[0,1]\times [-R,R]$, using the regularity results for linear parabolic equations, e.g. \cite[Theorem 6.48]{Lieberman1996}, 
\begin{align*}
    \phi \in C^{1+\alpha/2,\,2+\alpha}_{\mathrm{loc}}((0,1] \times \mathbb R).
\end{align*}
Repeating this step, we get that $ \bar u \in C^{\infty}((0,1] \times \mathbb R )$.
\end{proof}

\begin{lem}\label{lem:PDEbound}
Under Assumption~\ref{assume:f} (i)-(iii), the solution $\bar u$ to \eqref{eq:forward-baru} satisfies 
\begin{align*}
   \bar u_t  \in [K(m_0),K(M_0)], \quad \bar u_{yy}-\bar u_y \in [m_0,M_0], \quad (t,y) \in [0,1]\times \mathbb R.
\end{align*}
\end{lem}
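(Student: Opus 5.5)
Set $w:=\bar u_{yy}-\bar u_y$. The plan is to show that $w$ solves a linear, uniformly parabolic equation whose initial datum is $w_0$, apply a maximum principle to get $w\in[m_0,M_0]$, and then deduce the bound on $\bar u_t$ from the equation and the monotonicity of $K$.

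\emph{Step 1 (equation for $w$).} Differentiate \eqref{eq:forward-baru} in $y$; this is legitimate because $\bar u\in C^\infty((0,1]\times\mathbb R)$ by Lemma~\ref{lem:smooth}. With $\psi:=\bar u_y$ we get $\psi_t=K'(w)w_y$ and $\psi_{yt}=K''(w)w_y^2+K'(w)w_{yy}$. Hence
\begin{align*}
w_t = K'(w)\,(w_{yy}-w_y)+K''(w)\,w_y^2 .
\end{align*}
This is a quasilinear equation. Write $a(t,y):=K'(w(t,y))\in(0,\Lambda/2)$ and $b(t,y):=K''(w)w_y$, freezing the coefficients along the given solution. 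Then $w$ solves the linear equation $w_t=a\,w_{yy}+(b-a)\,w_y$, which has no zeroth-order term. At $t=0$ we have $w(0,\cdot)=w_0$, using the regularity up to $t=0$ from Lemma~\ref{lem:smooth}.

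\emph{Step 2 (maximum principle on $\mathbb R$).} We want $m_0\le w\le M_0$. Since the equation has no zeroth-order term, a constant is a solution, so the maximum principle on bounded cylinders is immediate. The obstacle is the unbounded domain: a priori $w$ is only locally bounded, so we need a growth condition at infinity. Two routes are available.
\begin{itemize}
\item[(a)] Use the comparison principle of Lemma~\ref{lem:comparison} on \eqref{eq:forward-baru} to get a bound on $\bar u_{yy}-\bar u_y$ directly. Compare $\bar u(t,y)$ with its spatial translates. Also compare $\bar u$ with $f(y)+K(M_0)t$ and $f(y)+K(m_0)t$: these are a super- and a subsolution, because $f''-f'=w_0\in[m_0,M_0]$ and $K$ is increasing. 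This yields $\bar u_t\in[K(m_0),K(M_0)]$ at $t=0^+$. The time-translate $\bar u(t+h,\cdot)$ solves the same equation, so comparison gives $|\bar u(t+h,\cdot)-\bar u(t,\cdot)|\le h\max(|K(m_0)|,|K(M_0)|)$. Hence $\bar u_t\in[K(m_0),K(M_0)]$ for all $t$, and then $w=K^{-1}(\bar u_t)\in[m_0,M_0]$ because $K$ is strictly increasing. This route bypasses the growth issue entirely, and it is the one I would actually use.
\item[(b)] Alternatively, work on $[0,1]\times[-R,R]$ with a barrier $\varepsilon(1+y^2)e^{Ct}$, using the Schauder bounds for growth control. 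This is messier.
\end{itemize}

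For route (a) the key points are as follows. First, translates in time are again solutions with at most linear growth. Second, the initial comparison $|\bar u(h,\cdot)-f|\le h\max(|K(m_0)|,|K(M_0)|)$ follows from comparison with the barriers $f+K(m_0)t$ and $f+K(M_0)t$; these are classical solutions of the corresponding inequalities with linear growth, by Assumption~\ref{assume:f}(i),(iii). More precisely, we get $f+K(m_0)h\le \bar u(h,\cdot)\le f+K(M_0)h$. Third, comparing $\bar u(\cdot+h,\cdot)$ with $\bar u+K(M_0)h$ and with $\bar u+K(m_0)h$ gives $K(m_0)h\le \bar u(t+h,y)-\bar u(t,y)\le K(M_0)h$. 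Dividing by $h$ and letting $h\to0$ gives the bound on $\bar u_t$.

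The main point to check is that adding a constant preserves the solution property, which holds because the equation depends only on derivatives. The bound on $w$ then follows by inverting the PDE. Assumption (iv) is not needed for this argument.
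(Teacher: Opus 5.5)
Your route (a) is the paper's argument. The barriers $f+tK(m_0)$ and $f+tK(M_0)$ give $f+hK(m_0)\le\bar u(h,\cdot)\le f+hK(M_0)$, and comparing the time-translate $\bar u(\cdot+h,\cdot)$ with $\bar u+hK(m_0)$ and $\bar u+hK(M_0)$ (which the paper phrases via the semigroup, constant-shift and monotonicity properties of $\mathcal{S}_t$) yields $\bar u_t\in[K(m_0),K(M_0)]$; inverting the strictly increasing $K$ then gives $\bar u_{yy}-\bar u_y\in[m_0,M_0]$, a step the paper leaves implicit. Your Step 1, route (b) and the comparison with spatial translates are not needed.
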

\begin{proof}
 For any continuous functions $g:\mathbb R \to \mathbb R$ with at most linear growth, define $\mathcal{S}_t (g)$ to be the solution to \eqref{eq:forward-baru} with initial condition $g$. It can be easily verified  that
\begin{enumerate}
    \item[(i)] $\mathcal{S}_t (\mathcal{S}_h (g))=\mathcal{S}_{t+h}(g)$. 
    \item[(ii)] $\mathcal{S}_t(g+c)=\mathcal{S}_t(g)+c$, where $c$ is a constant.
    \item[(iii)] $\mathcal{S}_t(g)\leq \mathcal{S}_t(l)$ if $g \leq l$ and with at most linear growth. 
\end{enumerate}

Let us define 
\begin{align*}
    \phi(t,y):= f(y)+t K(m_0),\\
    \psi(t,y):= f(y)+t K(M_0).
\end{align*}
As $K$ is increasing, $\phi$ and $\psi$ are subsolution and supersolution to \eqref{eq:forward-baru} respectively. 
Therefore by comparison principle in Lemma~\ref{lem:comparison}, 
$$\phi(t,y) \leq \bar u(t,y) \leq \psi(t,y).$$
Applying $\mathcal{S}_h$, $h>0$, to the inequality above, we obtain 
\begin{align*}
    \bar u(t,y)+hK(m_0) \leq \bar u(t+h,y) \leq \bar u(t,y)+hK(M_0),
\end{align*}
and hence 
\begin{align*}
K(m_0) \leq \frac{\bar u(t+h,y)-\bar u(t,y)}{h} \leq K(M_0).
\end{align*}
Letting $h \downarrow 0$, we conclude the result. 

\end{proof}

\begin{lem}\label{lem:Lipschitz}
    Under Assumption~\ref{assume:f}, $\bar u_t$ and $\bar u_{yy}-\bar u_y$ are Lipschitz, where $\bar u$ is the solution to \eqref{eq:forward-baru}. 
\end{lem}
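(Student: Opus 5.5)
We need to show that $\bar u_t$ and $w:=\bar u_{yy}-\bar u_y$ are Lipschitz under Assumption~\ref{assume:f}. The plan is to obtain a closed equation for $w$ and apply the same comparison and translation argument used in Lemma~\ref{lem:PDEbound}, but now one derivative higher. By Lemma~\ref{lem:smooth}, $\bar u$ is $C^\infty$ on $(0,1]\times\mathbb R$, so we may differentiate \eqref{eq:forward-baru}. Applying the linear operator $L:=\partial_{yy}-\partial_y$ to $\bar u_t=K(w)$ gives
\begin{equation*}
w_t=L\bigl(K(w)\bigr)=K'(w)\,(w_{yy}-w_y)+K''(w)\,w_y^2 .
\end{equation*}
This is a quasilinear, uniformly parabolic equation, since $K'(w)\in(0,\Lambda/2)$ by \eqref{eq:Kbound}. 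Its initial datum is $w(0,\cdot)=w_0\in C^{2+\alpha}(\mathbb R)$, and Lemma~\ref{lem:PDEbound} shows that $w$ is bounded, taking values in $[m_0,M_0]$.

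\emph{Lipschitz in $y$.} The equation for $w$ has coefficients independent of $y$, so it is translation invariant. For fixed $h\in\mathbb R$, both $w(t,y)$ and $w(t,y+h)$ solve it. Since $w_0$ is Lipschitz with constant $L_0:=\|w_0'\|_\infty$ (finite by (iv)), the shifted initial data are ordered up to the constant $L_0|h|$: $w_0(\cdot+h)\le w_0+L_0|h|$. Adding a constant to a solution again gives a solution, because the equation only involves derivatives of $w$ and the coefficients $K'(w),K''(w)$ evaluated at $w$. The difference $z(t,y):=w(t,y+h)-w(t,y)$ therefore satisfies a linear parabolic equation of the form
\begin{equation*}
z_t=a\,(z_{yy}-z_y)+b\,z_y+c\,z .
\end{equation*}
Here $a=K'(w(\cdot+h))$, $b=K''(w(\cdot+h))(w_y(\cdot+h)+w_y)$, and $c$ comes from the mean-value expansion of $K'$ and $K''$. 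The coefficient $c$ is bounded provided $w_y$ and $w_{yy}$ are bounded.

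Boundedness of the lower-order coefficients is the main obstacle: we need a priori bounds on $w_y,w_{yy}$ before knowing $w$ is Lipschitz. I would first try to avoid this through a direct maximum principle for $p:=w_y$. Differentiating once more gives
\begin{equation*}
p_t=K'(w)(p_{yy}-p_y)+3K''(w)\,p\,p_y+K'''(w)\,p^3 .
\end{equation*}
At an interior maximum of $p$ we have $p_y=0$ and $p_{yy}\le 0$, so the first two terms do not increase $p$. The cubic term $K'''(w)p^3$ could make $p$ blow up, so instead I would use a Bernstein-type auxiliary function $\Theta:=\tfrac12 p^2+\mu\,\psi(w)$, with $\psi$ convex, chosen so that the good term $K'(w)\psi''(w)p^2$ absorbs $K'''p^4$ and $K''p^2$ terms; because $w\in[m_0,M_0]$, all the $K$-derivatives are bounded constants. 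To make the maximum-principle argument rigorous on the unbounded domain, I would localize with a penalization $\varepsilon(1+y^2)$ and send $\varepsilon\to0$, using that $p$ is bounded near $t=0$ by (iv) and by the $C^{1+\alpha/2,2+\alpha}_{\mathrm{loc}}$ regularity up to $t=0$. The result is a bound $\|w_y\|_\infty\le C(L_0,m_0,M_0,l,\Lambda)$, i.e.\ $w$ is Lipschitz in $y$.

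\emph{Lipschitz in $t$.} For the time direction I would use the semigroup structure from Lemma~\ref{lem:PDEbound}. Time regularity of $w$ reduces to boundedness of $w_t=K'(w)(w_{yy}-w_y)+K''(w)w_y^2$, hence to bounds on $w_{yy}$. Let $q:=w_t$, which solves the linearized equation $q_t=L(K'(w)q)$. A comparison argument, as in Lemma~\ref{lem:PDEbound}, bounds $q$ by $\sup|q(0,\cdot)|$, and $q(0,\cdot)=K'(w_0)(w_0''-w_0')+K''(w_0)(w_0')^2$ is bounded by (iv). Hence $w$ is Lipschitz in $t$ as well. Finally, $\bar u_t=K(w)$ and $K$ is Lipschitz on $[m_0,M_0]$, so $\bar u_t$ is Lipschitz, which concludes the proof of the lemma.
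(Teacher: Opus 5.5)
Your proposal has genuine gaps, and the most serious one is in the time direction.

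\textbf{Time step.} Comparison does not bound $q=w_t$ by $\sup|q(0,\cdot)|$. Set $\chi:=K'(w)q=\bar u_{tt}$; it is comparable to $q$, since $K'$ is bounded between positive constants on $[m_0,M_0]$. Differentiating $\bar u_t=K(w)$ gives
\[
\chi_t=K'(w)(\chi_{yy}-\chi_y)+\frac{K''(w)}{K'(w)^2}\,\chi^2 .
\]
Equivalently, your equation $q_t=L(K'(w)q)$, written in nondivergence form, contains the term $\frac{K''(w)}{K'(w)}q^2$. Since $K''>0$, constants are not supersolutions. At a positive maximum you only get the Riccati inequality $\chi_t\le C\chi^2$, which does not rule out growth; the maximum principle gives only the lower bound.

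Nor can you transplant Lemma~\ref{lem:PDEbound}. That proof rests on $\mathcal S_t(g+c)=\mathcal S_t(g)+c$, which holds for the $\bar u$-equation but fails for the $w$-equation. Your claim that adding a constant to $w$ gives a solution is false precisely because the coefficients $K'(w),K''(w)$ change.

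So the upper bound on $w_t$ is missing. With $w,w_y$ bounded, this is equivalent to an upper bound on $w_{yy}$, a fourth derivative of $\bar u$. The paper gets it from a regularity theorem:
\begin{enumerate}
\item It replaces $K$ by a convex $\tilde K$ with $\tilde K'$ bounded between positive constants and $\tilde K=K$ on $[m_0,M_0]$.
\item It notes that $\phi=\bar u_t$ (your $K(w)$) solves the uniformly parabolic equation $\phi_t=\tilde K'(\bar u_{yy}-\bar u_y)(\phi_{yy}-\phi_y)$ with datum $\tilde K(w_0)$.
\item It invokes a result of Krylov for Lipschitz continuity of $\phi$, and then inverts $\tilde K$ on $[m_0,M_0]$.
\end{enumerate}

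\textbf{Space step.} The Bernstein argument also fails as written. The correct equation is $p_t=K'(w)(p_{yy}-p_y)+3K''(w)pp_y-K''(w)p^2+K'''(w)p^3$; you dropped $-K''(w)p^2$. At a spatial critical point of $\Theta=\tfrac12p^2+\mu\psi(w)$ one finds
\[
\Theta_t-K'(w)(\Theta_{yy}-\Theta_y)=K'''(w)p^4-K''(w)p^3-\mu\bigl(2K''(w)\psi'(w)+K'(w)\psi''(w)\bigr)p^2-\mu^2K'(w)\psi'(w)^2 .
\]
Your good term is only quadratic in $p$, so it cannot absorb $K'''(w)p^4$. Moreover $K'''>0$ wherever $K'<\Lambda/4$, which can occur on $[m_0,M_0]$.

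You need a multiplicative weight. Note that $\theta:=K(w)$ solves $\theta_t=A(\theta)(\theta_{yy}-\theta_y)$ with $A=K'\circ K^{-1}$ and $A'(\theta)=K''(w)/K'(w)>0$. For $0<\lambda<2\min A'/\max A$, the function $e^{\lambda\theta}\theta_y^2$ satisfies, at spatial critical points, a differential inequality with right-hand side $\le -c\,\theta_y^4+C|\theta_y|^3$.

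\textbf{Repairing the time step.} Once $\theta_y$ is bounded, the time step can be fixed in the same spirit. The function $\Xi:=\theta_{yy}+\kappa\theta_y^2$ with $\kappa>\max A'/(2\min A)$ satisfies
\[
\Xi_t-A(\theta)(\Xi_{yy}-\Xi_y)\le -c\,\theta_{yy}^2+C|\theta_{yy}|+C
\]
at critical points. This bounds $\theta_{yy}$, and hence $\chi=\theta_t=A(\theta)(\theta_{yy}-\theta_y)$, from above.

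\textbf{Localization.} This is not justified as stated. The $C^{1+\alpha/2,2+\alpha}_{\mathrm{loc}}$ regularity concerns $u$. It gives neither continuity of $w_y$ up to $t=0$ nor the growth control in $y$ that the penalization $\varepsilon(1+y^2)$ requires.
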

\begin{proof}
 Let us take a smooth modification $\tilde K$ of $K$ such that 
    \begin{align*}
        &\tilde K(y) = K(y), \qquad \ \ \ \text{ for } y \in [m_0,M_0], \\
       & 0<h < \tilde K'(y) < H \ \,  \, \, \text{ for }  y \in \mathbb R, \\
        &\tilde K'' \geq 0, \qquad \qquad \quad \ \ \text{ for } y \in \mathbb R. 
    \end{align*}
According to Lemma~\ref{lem:PDEbound}, the unique solution to the forward PDE 
\begin{equation*}
\begin{cases}
\bar u_t(t,y)=\tilde K \bigl(\bar u_{yy}(t,y)-\bar u_y(t,y)\bigr),
& (t,y)\in(0,1]\times\mathbb R,\\[0.3em]
\bar u(0,y)=f(y), & y\in\mathbb R,
\end{cases}
\end{equation*}
is exactly given by \eqref{eq:forward-baru}. Setting $\phi=\bar u_t$, it satisfies the quasi linear PDE 
\begin{equation*}
\begin{cases}
\phi_t(t,y)= \tilde K' (\bar u_{yy}(t,y)-\bar u_y(t,y) ) \left(\phi_{yy}(t,y)- \phi_y(t,y)\right)
& (t,y)\in(0,1]\times\mathbb R\\[0.3em]
\phi(0,y)=\tilde K (f''(y)-f'(y)) & y\in\mathbb R,
\end{cases}
\end{equation*}
which is uniform elliptic. Invoking \cite[Section 5.5, Theorem 2]{K}, we get that its solution $\phi=\bar u_t$ is Lipschitz. Restricted to $[m_0,M_0]$, function $\tilde K$ is bi-Lipschitz, and $\bar u_{yy} - \bar u_y=K^{-1}(\bar u_t)$ is also Lipschitz. 
\end{proof}

With the above results at hand, we show that the delta hedging strategy 
\begin{align}\label{eq:deltahedging}
\tilde{\gamma}^n:=\left( v_x\left(\frac{i+1}{n}, S^n_{i}\right)   \right)_{0 \leq i \leq n-1}
\end{align}
is asymptotically optimal for the exponential hedging in \eqref{2.3}. Define its corresponding value 
\begin{align}
    \tilde{\mathcal{C}}_n:= \frac{1}{nl}\log \mathbb E_{\mathbb P} \left[ \exp \left(nl \left(F(S^n_n)-\sum_{i=0}^{n-1}\tilde\gamma^n_i(S^{n}_{i+1}-S^n_i) \right)\right)\right].
\end{align}

\begin{thm}\label{thm:2}
    Under Assumption~\ref{assume:f}, it holds that,
    \begin{align*}
        \lim\limits_{n \to \infty} \mathcal{C}_n= \lim\limits_{n \to\infty} \tilde{\mathcal{C}}_n,
    \end{align*}
    that is, the delta strategies defined in \eqref{eq:deltahedging} is asymptotically optimal. 
\end{thm}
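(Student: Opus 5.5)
Since $\tilde\gamma^n\in\Gamma_n$, we have $\mathcal C_n\le\tilde{\mathcal C}_n$ for every $n$. By Theorem~\ref{thm.1}, $\lim_n\mathcal C_n$ equals the right-hand side of \eqref{2.3}, which is $v(0,S_0)$: the viscosity solution of \eqref{eq:backward-v} is unique by Lemma~\ref{lem:comparison}, and the value function of the control problem is such a solution. So it suffices to prove $\limsup_n\tilde{\mathcal C}_n\le v(0,S_0)$. The plan is a one-step supermartingale-type estimate. Set
\[
Z_i:=\exp\Bigl(nl\bigl(v(\tfrac in,S^n_i)-\textstyle\sum_{j<i}\tilde\gamma^n_j(S^n_{j+1}-S^n_j)\bigr)\Bigr).
\]
We aim to show
\[
\mathbb E_{\mathbb P}[Z_{i+1}\mid\mathcal F_i]\le Z_i\,e^{\varepsilon_n}, \qquad n\,\varepsilon_n\cdot\tfrac1{nl}\to0 .
\]
More precisely, we want $\sum_i\varepsilon_{n,i}=o(n)$ uniformly on the relevant paths. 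Since $Z_n=\exp(nl(F(S^n_n)-V^{\tilde\gamma^n}))$, iterating the estimate gives $\tilde{\mathcal C}_n\le v(0,S_0)+\frac1{nl}\sum_i\varepsilon_{n,i}$.

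For the one-step computation, write $t=\frac in$, $x=S^n_i$, $\Delta=x\bar\sigma\xi/\sqrt n$, and $a=v_x(\frac{i+1}n,x)$. We expand
\[
v(t+\tfrac1n,x+\Delta)-v(t,x)-a\Delta .
\]
It is natural to split this as $[v(t+\frac1n,x)-v(t,x)]+[v(t+\frac1n,x+\Delta)-v(t+\frac1n,x)-v_x(t+\frac1n,x)\Delta]$. Taylor expansion in $x$ at the fixed time $t+\frac1n$ is the reason the delta is evaluated at $\frac{i+1}n$. The first bracket equals $-\frac1nK(x^2v_{xx})+O(n^{-2}\,\mathrm{Lip}(v_t))$, using Lemma~\ref{lem:Lipschitz} in the time variable. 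The second bracket equals $\frac12 v_{xx}\Delta^2+R$, where $R$ involves the modulus of continuity of $x^2v_{xx}$ on a spatial increment of size $O(x/\sqrt n)$. Passing to log-space $y=\log x$, the quantity $x^2v_{xx}=\bar u_{yy}-\bar u_y$ is bounded (Lemma~\ref{lem:PDEbound}) and Lipschitz (Lemma~\ref{lem:Lipschitz}). Hence $|R|\le C n^{-3/2}$ uniformly in $x$, after converting the increment $\log(1\pm\bar\sigma/\sqrt n)$ versus $\pm\bar\sigma/\sqrt n$, which contributes the $\bar u_y$ correction terms of order $n^{-3/2}$. Multiplying by $nl$ and taking the conditional expectation over $\xi\in\{-1,0,1\}$ gives
\[
\mathbb E\bigl[e^{nl(\cdot)}\bigr]=e^{-lK(w)+O(n^{-1/2})}\bigl((1-p)+p\,e^{\frac l2\Lambda w+O(n^{-1/2})}\bigr)=e^{O(n^{-1/2})},
\]
with $w=x^2v_{xx}$, by the very definition of $K$. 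The exponent is bounded uniformly since $w\in[m_0,M_0]$. Thus $\varepsilon_{n,i}=O(n^{-1/2})$, so $\sum_i\varepsilon_{n,i}=O(\sqrt n)=o(n)$, and $\tilde{\mathcal C}_n\le v(0,S_0)+O(n^{-1/2})$.

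The main obstacle is making the Taylor remainder bound uniform in $x$ and valid up to $t=1$. Uniformity in $x$ follows from working in log-coordinates, where Lemmas~\ref{lem:PDEbound} and \ref{lem:Lipschitz} give global bounds. Near $t=1$ (forward time $0$), Lemma~\ref{lem:smooth} provides $C^{1+\alpha/2,2+\alpha}_{\mathrm{loc}}$ regularity up to the boundary, and Assumption~\ref{assume:f}(iv) supplies the Lipschitz bound on $\bar u_t$ there. One also needs care that the third-order term is controlled only through the Lipschitz constant of $\bar u_{yy}-\bar u_y$ rather than through a third derivative. Concretely, we would use the integral form of the remainder, $\int_0^1(1-s)[\,w(y+s\delta)-w(y)\,]\,ds\,\delta^2$, plus the first-order $\bar u_y$ terms, and each of these is $O(|\delta|^3)$.
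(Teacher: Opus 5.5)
Your proposal is correct and follows the paper's proof step for step. You use the same split of $I_i^n$ into the time increment $\int_{t_i}^{t_{i+1}}v_t(s,S_i^n)\,ds$ and the spatial Taylor remainder at the frozen time $t_{i+1}$, and the same use of Lemma~\ref{lem:Lipschitz} to get a per-step error $O(n^{-1/2})$ after multiplying by $nl$. You also use the identity $\mathbb E_{\mathbb P}[e^{\frac l2\Lambda w\xi^2}]=e^{lK(w)}$ and the same iteration yielding $e^{C\sqrt n}$.

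The one superfluous point is your log-space detour with ``$\bar u_y$ correction terms''. It would need a bound on $\bar u_y$ that the lemmas do not provide. The paper instead stays in $x$-space and writes $x^2v_{xx}(t,x(1+\theta h))=\Gamma(t,x(1+\theta h))/(1+\theta h)^2$, so it needs only boundedness of $\Gamma$ and its Lipschitz continuity in $(t,\log x)$.
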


\begin{proof}
 By duality \eqref{2.3} and the HJB \eqref{eq:backward-v}, it suffices to prove 
    \[
         \limsup_{n \to\infty} \tilde{\mathcal{C}}_n \leq v(0,1).
    \]
Let us define for $i=0,\dotso, n-1$
\begin{align*}
&t^n_i:=\frac{i}{n},
\qquad
\Delta S_i^n:=S_{i+1}^n-S_i^n,
\qquad
\tilde\gamma_i^n:=v_x(t^n_{i+1},S_i^n), \\
&I_i^n
:=
v(t^n_{i+1},S_{i+1}^n)-v(t^n_i,S_i^n)-v_x(t^n_{i+1},S_i^n)\Delta S_i^n
,
\end{align*}
Using \(v(1,x)=F(x)\) and the equality
\[
\sum_{i=0}^{n-1} I_i^n
=
v(1,S_n^n)-v(0,S_0^n)-\sum_{i=0}^{n-1}\tilde\gamma_i^n(S_{i+1}^n-S_i^n),
\]
it suffices to prove
\[
\limsup_{n\to\infty}
\frac{1}{nl}
\log
\mathbb E_{\mathbb P}
\left[
\exp\left( nl \sum_{i=0}^{n-1} I_i^n\right)
\right]
\le 0.
\]

By the fundamental theorem of calculus in time,
\begin{align*}
&v(t^n_{i+1},S_{i+1}^n)-v(t^n_i,S_i^n)-v_x(t^n_{i+1},S_i^n)\Delta S_i^n \\
&=
\int_{t_i}^{t_{i+1}} v_t(s,S_i^n)\,ds
+
\int_{S_i^n}^{S_{i+1}^n}(S_{i+1}^n-y)\,v_{xx}(t^n_{i+1},y)\,dy.
\end{align*}
Denote by $A^n_i$ and $B^n_i$ the first and the second term on the right hand side of the equation above.

Since \(v_t\) is Lipschitz in time by Lemma~\ref{lem:Lipschitz}, there exists \(L>0\) such that
\begin{align*}
\left|nA_i^n-v_t(t^n_i,S_i^n)\right|
&\le
n\int_{t^n_i}^{t^n_{i+1}} |v_t(s,S_i^n)-v_t(t^n_i,S_i^n)|\,ds \\
&\le
n\int_{t^n_i}^{t^n_{i+1}} L(s-t^n_i)\,ds
=
\frac{L}{2n}.
\end{align*}
Next, using the identity
\[
\int_x^{x+\delta}(x+\delta-y)f(y)\,dy
=
\delta^2\int_0^1 (1-\theta)f(x+\theta\delta)\,d\theta,
\]
valid for any \(x,\delta\in\mathbb R\), we obtain
\begin{align*}
nB_i^n
&=
n(\Delta S_i^n)^2
\int_0^1 (1-\theta)\,
v_{xx}(t^n_{i+1},S_i^n+\theta\Delta S_i^n)\,d\theta \\
&=
n(S_i^n)^2(h_i^n)^2
\int_0^1 (1-\theta)\,
v_{xx}(t^n_{i+1},S_i^n(1+\theta h_i^n))\,d\theta,
\end{align*}
where $h^n_i:=\frac{\bar \sigma}{\sqrt{n}}\xi_{i+1}$.
Denoting
\[
\Gamma(t,x)=x^2v_{xx}(t,x),
\]
it follows that
\[
n B_i^n
=
\bar\sigma^2\xi_{i+1}^2
\int_0^1 (1-\theta)\,
\frac{\Gamma(t^n_{i+1},S_i^n(1+\theta h_i^n))}{(1+\theta h_i^n)^2}\,d\theta.
\]

Noting that $\Gamma(t,x)= \bar u_{yy}(1-t, \log x) -\bar u_y (1-t, \log x)$ is Lipschitz in $(t,\log x)$ by Lemma~\ref{lem:Lipschitz}, we have the inequality
\[
\left|
\frac{\Gamma(t^n_{i+1},S_i^n(1+\theta h_i^n))}{(1+\theta h_i^n)^2}
-
\Gamma(t^n_i,S_i^n)
\right|
\le \frac{C}{\sqrt n},
\]
uniformly in \(i\) and \(\theta\in[0,1]\). Consequently,
\begin{align*}
\left|
nB_i^n-\frac{\bar\sigma^2}{2}\Gamma(t^n_i,S_i^n)\xi_{i+1}^2
\right| \le \frac{C}{\sqrt n}.
\end{align*}

Combining the bounds for \(A_i^n\) and \(B_i^n\), we obtain
\[
n I_i^n
=
v_t(t^n_i,S_i^n)
+\frac{\bar\sigma^2}{2}\Gamma(t^n_i,S_i^n)\xi_{i+1}^2
+R_i^n,
\]
with
\(
|R_i^n|\le \frac{C}{\sqrt n}.
\) Therefore, we obtain
\begin{align*}
\mathbb E_{\mathbb P}\left[\exp(n l I_i^n)\mid\mathcal F_i\right]
&\le
\exp\left(l v_t(t^n_i,S_i^n)+\frac{C}{\sqrt n}\right)
\mathbb E_{\mathbb P}\left[
\exp\left(\frac{\bar\sigma^2 l}{2}\Gamma(t^n_i,S_i^n)\xi_{i+1}^2\right)
\Bigm|\mathcal F_i
\right] \\
&=\exp\left(l v_t(t^n_i,S_i^n)+ l K(\Gamma(t^n_i,S^n_i))+\frac{C}{\sqrt n}\right),
\end{align*}
where we used the condition law of $\xi_{i+1}$ under $\mathbb P$.

Since \(v\) solves the HJB equation \eqref{eq:backward-v}
\[
v_t(t,x)+K(\Gamma(t,x))=0,
\]
it follows that
\begin{align*}
\mathbb E_{\mathbb P}\left[\exp\left( nl \sum_{i=0}^{n-1} I_i^n\right)\right]
&=
\mathbb E_{\mathbb P}\left[
\exp\left( nl \sum_{i=0}^{n-2} I_i^n\right)
\mathbb E_{\mathbb P}\left[e^{nl I_{n-1}^n}\mid\mathcal F_{n-1}\right]
\right] \\
&\le
e^{C/\sqrt n}
\mathbb E_{\mathbb P} \left[\exp\left( nl \sum_{i=0}^{n-2}  I_i^n\right) \right] \\
&\le \cdots \le e^{Cn/\sqrt n}=e^{C\sqrt n}.
\end{align*}
Therefore we conclude that 
\[
\limsup_{n\to\infty}
\frac{1}{nl}
\log
\mathbb E_{\mathbb P}
\left[
\exp\left( nl \sum_{i=0}^{n-1} I_i^n\right)
\right]
\le 0.
\]
\end{proof}

\begin{example}
    If $F(x)= \alpha+\beta \log(x)$ with some constant $\alpha,\beta$, then we have an explicit solution to \eqref{eq:backward-v}, 
    $$ v(t,x)= \alpha+\beta \log(x)+\frac{1-t}{l} \log \left((1-p)+p \exp\left(-l\bar\sigma^2 \beta /2 \right) \right).$$
    The delta-hedging strategy $v_x(t,x)=\frac{\beta}{x}$ is time-independent. 
\end{example}

\end{document}